\documentclass[journal]{IEEEtran}
\usepackage{cite}
\usepackage{amsmath,amssymb,amsfonts}
\usepackage{amsthm}
\usepackage{algorithmic}
\usepackage{graphicx}
\usepackage{textcomp}
\usepackage{xcolor}
\usepackage{cuted}
\usepackage{epstopdf}
\usepackage{stfloats}
\usepackage{algorithm}
\usepackage{enumerate}
\usepackage{comment}
\usepackage[caption=false,font=small,labelfont=sf,textfont=sf]{subfig}

\theoremstyle{plain}

\newtheorem{lemma}{Lemma}
\newtheorem{proposition}{Proposition}
\newtheorem{corollary}{Corollary}
\newtheorem{remark}{Remark}

\newcommand{\tr}{\mathrm{tr}}
\newcommand{\diag}{\mathrm{diag}}

\newcommand{\Rea}{\mathrm{Re}}

\begin{document}
 

\title{Two-Timescale Design for Rotatable-Antenna Systems With Imperfect CSI: Rate Analysis \\and Orientation Optimization}

\author{Ziyuan~Zheng, Qingqing~Wu, and Wen Chen
\vspace{-30pt}
\thanks{Z. Zheng, Q. Wu, and W. Chen are with the School of Integrated Circuits, Shanghai Jiao Tong University, 200240, China (e-mail: \{zhengziyuan2024, qingqingwu, wenchen\}@sjtu.edu.cn).}
}

\markboth{}%
{Shell \MakeLowercase{\textit{et al.}}: Bare Demo of IEEEtran.cls for IEEE Journals}

\maketitle

\begin{abstract}
This paper studies uplink multiuser MIMO with a rotatable antenna (RA) array under imperfect channel state information (CSI), where each base-station antenna can adjust its boresight direction within an angular region. 
To balance performance and control overhead, we propose a two-timescale design: RA orientations are optimized from statistical CSI on a large timescale, while linear receive combiners are updated per coherence block from linear minimum-mean-squared-error (LMMSE) channel estimates.
Under this framework, we derive a closed-form use-and-then-forget (UatF)-based rate expression for maximum-ratio combining (MRC) and a closed-form statistical rate surrogate for weighted zero-forcing (wZF) under imperfect CSI, revealing how RA rotation influences useful signal strength, estimation-error-induced self-interference, and multiuser interference. 
The analysis shows that the orientation minimizing channel-estimation error differs from the rate-maximizing one, and that MRC and wZF prefer different rotation configurations due to their distinct mechanisms of signal aggregation and error-aware user separation.
For the resulting non-convex rotation design problems, we develop a projected-gradient algorithm over a product of spherical caps with explicit derivatives of the required channel statistics and rate metrics.
Numerical results verify the accuracy of the large-timescale surrogates and show substantial performance gains from RA optimization.
\end{abstract}

\begin{IEEEkeywords}
Rotatable antenna, orientation optimization, imperfect channel state information,  two-timescale design, maximum ratio combining, weighted zero-forcing.
\end{IEEEkeywords}

\IEEEpeerreviewmaketitle

\vspace{-12pt}
\section{Introduction}

Multiple-input multiple-output (MIMO) is a key technology for improving spectral efficiency by exploiting spatial degrees of freedom (DoF) \cite{ref1,ref2}. As wireless systems evolve toward larger bandwidths and higher frequencies, large antenna arrays become increasingly important for compensating propagation loss and spatially multiplexing multiple users \cite{ref3,ref4}. Yet in many practically relevant regimes, merely increasing the number of antennas or the transmit power often yields diminishing returns. When user channels are highly correlated, the line-of-sight (LoS) component is dominant, or the channel is only imperfectly known, conventional fixed-pattern arrays leave a fraction of the available DoF unused \cite{ref5}. This limitation has motivated a broader shift from \emph{signal-only adaptation} to \emph{physical-layer adaptation}, in which one may improve a wireless link in three ways: reshape the propagation environment, reshape the array geometry, or reshape the antenna directivity \cite{ref6}. The first direction is represented by reconfigurable radio environments such as intelligent reflecting surfaces (IRSs) \cite{ref7,ref8}. The second is represented by movable or fluid antennas, which adapt antenna positions to exploit favorable spatial points \cite{ref9,ref10,ref11,ref12,ref13}. The third direction keeps antenna locations fixed but changes the directional response of each array element by rotating its boresight, which is appealing when the array layout and radio frequency inter-connection should remain unchanged, but a finer degree of \emph{direction-domain adaptation} is still desired \cite{ref14,ref15,ref16}.

Rotatable antennas (RAs) provide such a mechanism with hardware refinement \cite{ref16,ref17,ref18,ref19}. In an RA-enabled array, each element is directional and can steer its boresight within a feasible angular region. Since the element gain depends on the incident direction, RA rotations reshape both the deterministic LoS mean and the scattering covariance of the wireless channel statistics, affecting both channel state information (CSI) quality and transmission spectral efficiency in uplink reception. In other words, RA rotation acts before digital combining and thus couples physical array control with channel estimation and multiuser detection. This coupling is precisely what makes RA design both promising and challenging. On the one hand, rotating the array can improve channel strength, reduce interference coupling, and sharpen spatial discrimination without moving the antenna positions. On the other hand, the optimal rotation depends on channel statistics that are much slower than those of small-scale fading, whereas the receive combiner should still respond to instantaneous channel realizations. Updating all element orientations once per coherence block is therefore generally impractical \cite{ref20,ref21,ref22}. It would incur substantial mechanical or electronic control overhead and, more importantly, would require instantaneous CSI acquisition over many candidate orientations. This creates a natural timescale mismatch between fast fading and slow orientation control.

A two-timescale design is therefore a natural and practically motivated framework for RA-enabled multiuser reception. On the large timescale, one should optimize the RA orientations using slowly varying statistical CSI determined by user geometry, scatterer geometry, and large-scale fading. On the small timescale, one should update the receive combiner from the instantaneous channel estimates corresponding to the selected orientation. Similar timescale separation has proved highly effective in other physically reconfigurable wireless architectures, such as IRS-assisted systems and statistical-CSI-based movable-antenna designs \cite{ref23,ref24}. However, RA systems are different from both of these lines of work. Unlike IRSs, RAs act directly on the receiving array rather than through a cascaded reflected channel. Unlike movable antennas, RAs do not change propagation distances or array-element locations; instead, they change the element-wise directional sensitivity and, hence, the channel statistics in a different, more structured manner. As a result, existing models, optimization strategies, and insights for IRSs or movable antennas do not directly address the key questions posed by RA-enabled uplink MU-MIMO, especially under imperfect CSI.

This paper addresses these gaps by developing a two-timescale uplink MU-MIMO design for RA systems. The key idea is to treat RA orientation as a large-timescale statistical design variable and to track how it propagates through the geometry-based channel, the linear minimum mean squared error (LMMSE) estimator, and the maximum ratio combining (MRC) and weighted zero-forcing (wZF) receivers. The main contributions of this paper are summarized as follows.
\begin{enumerate}
\item For the RA-enabled uplink MU-MIMO system with geometry-based channel, we propose a two-timescale architecture and formulate rotation optimization problems with imperfect CSI. Specifically, the RA rotation matrix is optimized using statistical CSI on a large timescale, while the receive combiner is updated using instantaneous LMMSE channel estimates on a small timescale.
\item Under the proposed two-timescale framework, we derive a closed-form use-and-then-forget (UatF)-based rate expression for MRC and a closed-form statistical rate surrogate for wZF induced by non-isotropic estimation errors. These expressions reveal how RA rotation affects useful signal strength, estimation-error-induced self-interference, and multiuser interference.
\item Our analysis further reveals that RA rotation can improve CSI quality by strengthening the channel covariance over the active subspace, but the orientation minimizing NMSE generally differs from the one maximizing communication rate, and MRC and wZF prefer different rotation configurations because they emphasize signal aggregation and error-aware user separation differently. 
\item Due to the highly non-convex dependence of the closed-form rate surrogates under MRC and wZF on the RA rotation matrix, we develop a unified algorithm framework based on the projected-gradient method for the RA rotation optimization problems under MRC and wZF over a product of spherical caps. We derive explicit gradients of rate surrogates and derivatives of channel statistics.
\item In the numerical results, the closed-form surrogates closely match block-level ergodic performance, validating the two-timescale design. Optimized RA orientations outperform random or fixed orientations, especially in interference-limited regimes. The theoretical insights are verified, and wZF benefits more strongly from RA rotation because it exploits error-aware user separation.
\end{enumerate}

The remainder of this paper is organized as follows. Section~\ref{sec:system_model} presents the system model. Section~\ref{sec:channel_estimation} presents LMMSE estimation and error statistics. Section~\ref{sec:problem_formulation} formulates the two-timescale optimization problem. Section~\ref{sec:rate_analysis} derives closed-form rate surrogates. Section~\ref{sec:rotation_optimization} develops the projected-gradient algorithm. Section~\ref{sec:numerical_results} presents numerical results, and Section~\ref{sec:conclusion} concludes the paper.

\setlength{\abovecaptionskip}{0pt}
 \begin{figure}
\centering
\includegraphics[width=3.5in]{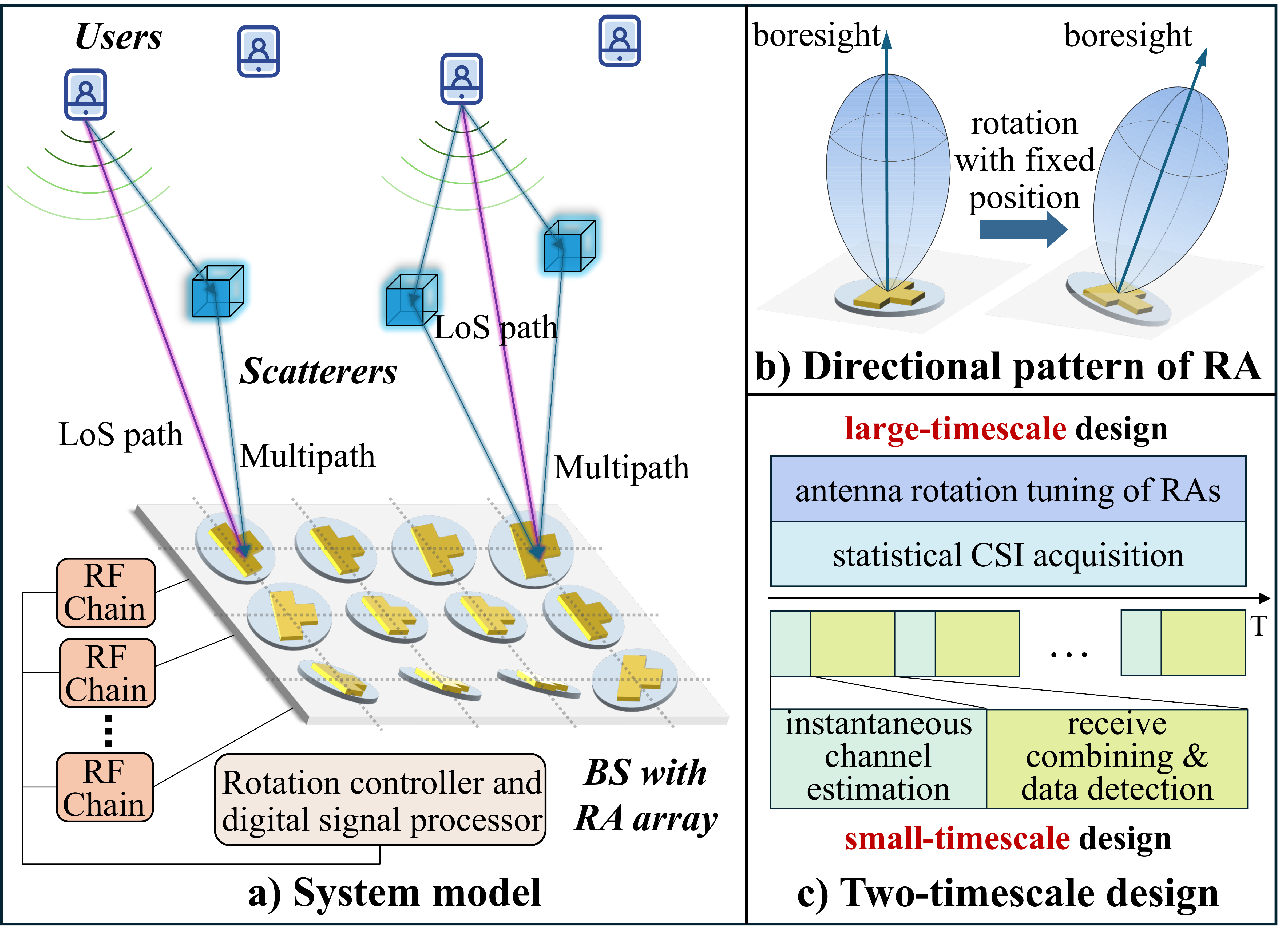}
\captionsetup{font=small}
\caption{Illustration of the system model: a) the RA-enabled uplink multiuser MIMO system, b) directional radiation pattern and rotations of RA, and c) the framework of the proposed two-timescale design.} 
\label{fig:system_model}
\vspace{-15pt}
\end{figure}

\vspace{-9pt}
\section{System Model}\label{sec:system_model}
\vspace{-3pt}

As illustrated in Fig.~\ref{fig:system_model}, we consider a multiuser MIMO system, where a base station (BS) equipped with a uniform planar array (UPA) of $N$ rotatable antennas (RA) simultaneously serves $K$ single-antenna users for uplink communication. Let $\mathcal{N} \triangleq \{1,\dots,N\}$ and $\mathcal{K} \triangleq \{1,\dots,K\}$ denote the index sets of RA elements and users, respectively. 

\vspace{-5pt}
\subsection{Rotation Vectors and Directional Gain of RA}
\vspace{-2pt}

We adopt a three-dimensional (3D) Cartesian coordinate system, in which the RA array lies on the $x$–$O$-$y$ plane, with its center located at the origin $O$. The position of RA $n$ is denoted by $\boldsymbol{w}_n \triangleq \left[x_n,  y_n,  0\right]^T \in \mathbb{R}^{3 \times 1}$, where $(x_n,y_n)$ are determined by the UPA geometry with half-wavelength inter-element spacing and arrangement of $N=N_\text{col} \times N_\text{row}$ elements. The position of user $k$ is denoted by $\boldsymbol{u}_k \triangleq [x_k,  y_k,  z_k]^T \in \mathbb{R}^{3 \times 1}$. Accordingly, we denote $r_{k,n}$ and $\boldsymbol{s}_{k,n}$ as the distance between user $k$ and RA $n$ and the unit direction vector from RA $n$ to user $k$, respectively, which can be given by
\begin{align}
r_{k,n} \triangleq \|\boldsymbol{u}_k - \boldsymbol{w}_n\|,\,\,
\boldsymbol{s}_{k,n} \triangleq \frac{\boldsymbol{u}_k-\boldsymbol{w}_n}{\lVert \boldsymbol{u}_k-\boldsymbol{w}_n \rVert}=\frac{\boldsymbol{u}_k-\boldsymbol{w}_n}{r_{k,n}}.
\label{eq:incident_signal_distance_direction}
\end{align}

Each RA is implemented as a directional antenna whose boresight direction can be independently adjusted by mounting it on a rotation platform, while its physical position $\boldsymbol{w}_n$ remains fixed. The boresight direction of RA $n$ is characterized by a unit pointing vector $\boldsymbol{f}_n\triangleq \left[ f_{x,n},f_{y,n},f_{z,n} \right] ^T\in \mathbb{R}^{3\times 1}$ with $\lVert \boldsymbol{f}_n \rVert =1,\forall n\in \mathcal{N}$, where $f_{x,n}$, $f_{y,n}$, and $f_{z,n}$ denote the projections of the pointing vector on the $x$-, $y$-, and $z$-axes, respectively. 
Equivalently, $\boldsymbol{f}_n$ can be parameterized by an elevation angle $\theta_n\in[0,\pi]$ between the boresight direction and the positive $z$-axis and an azimuth angle $\phi_n\in[-\pi,\pi)$ between the projection of the boresight direction onto the $x$-$O$–$y$ plane and the positive $x$-axis as $\boldsymbol{f}_n=\left[ \sin \theta _n\cos \phi _n, \sin \theta _n\sin \phi _n,\cos \theta _n \right]^T$.
We collect the pointing vectors of all RAs into the matrix 
$\boldsymbol{F} \triangleq [ \boldsymbol{f}_1,\dots,\boldsymbol{f}_N ] \in \mathbb{R}^{3 \times N}$,
which serves as the rotation configuration of the RA array.

In practice, the rotation range of each RA is limited within a finite angular region due to mechanical constraints. Let $\boldsymbol{e}_z\triangleq[0,0,1]^{T}$ denote the array normal direction. By imposing a constraint related to feasible orientation $\arccos(\boldsymbol{f}_n^{T}\boldsymbol{e}_z)$, the feasible set of boresight vectors for each RA is thus
\begin{align}
\mathcal{F}_n \triangleq \big\{\boldsymbol{f}_n\in\mathbb{R}^3:\|\boldsymbol{f}_n\|=1,~\arccos(\boldsymbol{f}^{T}_n\boldsymbol{e}_z)\le \theta_{\max}\big\},
\label{eq:feasible_region}
\end{align}
where $\theta_{\max}\in(0,\pi/2]$ specifies the maximum allowable tilt angle from the array normal. Collectively, the feasible set of orientation vectors is $\mathcal{F}$. The radiation pattern of each RA depends on the incident signal direction $\boldsymbol{s}_{k,n}$ in \eqref{eq:incident_signal_distance_direction}. Define the mismatch angle between $\boldsymbol{f}_n$ and $\boldsymbol{s}_{k,n}$ as 
\begin{align}
\vartheta_{k,n} \triangleq \arccos ( \boldsymbol{f}_n^T \boldsymbol{s}_{k,n}), \forall k \in \mathcal{K},\forall n \in \mathcal{N}.
\end{align}
Following a cosine-based parametric model of half-space directional gain, the normalized element pattern of RAs is
\begin{align}
G\left( \boldsymbol{f}_{n},\boldsymbol{s}_{k,n} \right) = G_0\,[ \boldsymbol{f}_{n}^{T}\boldsymbol{s}_{k,n} ]_+^{2b},
\label{eq:element_gain}
\end{align}
where $[\cdot]_+\triangleq \max\{\cdot,0\}$ guarantees a non-negative value, $b\ge 0$ is the directional factor determining the main-lobe beamwidth, and $G_0 \triangleq 2(2b+1)$ is the maximum gain satisfying the power normalization and achieved when the incident wave is aligned with the RA boresight, i.e., $\boldsymbol{f}_{n}^{T}\boldsymbol{s}_{k,n}=1$.

\vspace{-9pt}
\subsection{Geometry-Based Rician Channel Model}
\vspace{-3pt}

We adopt a geometry-based narrow-band Rician fading model for the RA-enabled uplink channels. The propagation environment is characterized by a dominant line-of-sight (LoS) path and $Q$ non-LoS (NLoS) scatterer clusters distributed in space. 
For the LoS path between user $k$ and RA $n$, the large-scale path gain is modeled as
\begin{align}
g_{k,n}^{\mathrm{LoS}}(\boldsymbol{f}_n)
=\varrho \frac{G(\boldsymbol{f}_{n},\boldsymbol{s}_{k,n})}{4\pi r_{k,n}^2},
\label{eq:LoS_gain}
\end{align}
where $\varrho$ is the physical size of each RA. For the NLoS component, the signal from user $k$ is first scattered by cluster $q\in\mathcal{Q}$, then received by RA $n$. The position of scatterer cluster $q$ is denoted by $\boldsymbol{c}_q\in\mathbb{R}^3$ and its corresponding set is $\mathcal{Q}\triangleq\{1,\ldots,Q\}$. Following \eqref{eq:incident_signal_distance_direction}, the distance and unit direction vector from RA $n$ to cluster $q$ are denoted by $r_{q,n} \triangleq \|\boldsymbol{c}_q - \boldsymbol{w}_n\|$ and $\boldsymbol{s}_{q,n} \triangleq \frac{\boldsymbol{c}_q-\boldsymbol{w}_n}{\lVert \boldsymbol{c}_q-\boldsymbol{w}_n \rVert}=\frac{\boldsymbol{c}_q-\boldsymbol{w}_n}{r_{q,n}}$, respectively. Then, similar to \eqref{eq:LoS_gain}, the directional path gain associated with cluster $q$ at RA $n$ is modeled as
\begin{align}
g_{q,n}^{\mathrm{c}}(\boldsymbol{f}_n)
= \varrho\frac{ G( \boldsymbol{f}_n,\boldsymbol{s}_{q,n}) }{4\pi r_{q,n}^2}. 
\label{eq:cluster_gain}
\end{align}
We further define $d_{q,k}\triangleq \|\boldsymbol{u}_k-\boldsymbol{c}_q\|$ the distance from cluster $q$ to user $k$.
Next, we take small-scale fading into account for the complex baseband channel coefficient between user $k$ and RA $n$ with the multipath
\begin{align}
h_{k,n}(\boldsymbol{f}_n)
= h_{k,n}^{\mathrm{LoS}}(\boldsymbol{f}_n) + h_{k,n}^{\mathrm{NLoS}}(\boldsymbol{f}_n). 
\label{eq:hk_n_decomp}
\end{align}
For the LoS component, we have
\begin{align}
h_{k,n}^{\mathrm{LoS}}(\boldsymbol{f}_n)
= \sqrt{g_{k,n}^{\mathrm{LoS}}(\boldsymbol{f}_n)} 
e^{-j\frac{2\pi}{\lambda} r_{k,n}},
\label{eq:hk_n_LoS}
\end{align}
where $\lambda$ denotes the carrier wavelength. For the NLoS component, we model the contribution of each cluster as an independent complex    
path and obtain
\begin{align}
h_{k,n}^{\text{NLoS}}\left( \boldsymbol{f}_n \right) =\sum_{q=1}^Q{\varpi _{k,q}\frac{\sqrt{\sigma _qg_{q,n}^{\text{c}}\left( \boldsymbol{f}_n \right)}}{d_{q,k}}}e^{-j\frac{2\pi}{\lambda}\left( r_{q,n}+d_{q,k} \right)},
\label{eq:hk_n_NLoS}
\end{align}
where $\sigma_q>0$ represents the radar cross section of scatterer cluster $q$, the random variables $\varpi_{k,q}\sim\mathcal{CN}(0,1)$ model the user–cluster small-scale fading. The variables $\{\varpi_{k,q}\}$ are independent across $(k,q)$, and independent of the LoS component.

For a given RA orientation matrix $\boldsymbol{F}$ and stable geometry $\{\boldsymbol{u}_k\}$, $\{\boldsymbol{w}_n\}$ and $\{\boldsymbol{c}_q\}$, the randomness of $\boldsymbol{h}_k(\boldsymbol{F})$ mainly stems from the small-scale fading coefficients $\{\varpi_{k,q}\}$ The uplink channel vector $\boldsymbol{h}_k(\boldsymbol{F})$ from user $k$ can be written in the general Rician form
\vspace{-3pt}
\begin{align}
\boldsymbol{h}_k(\boldsymbol{F})
= \boldsymbol{\mu}_k(\boldsymbol{F}) + \boldsymbol{B}_k(\boldsymbol{F}) \boldsymbol{\varpi}_k, 
\label{eq:hk_expression}
\end{align}
where $\boldsymbol{\mu}_k(\boldsymbol{F})\in\mathbb{C}^{N\times 1}$ collects the deterministic LoS components across the RAs, $\boldsymbol{B}_k (\boldsymbol{F}) \in \mathbb{C}^{N\times Q}$ is a deterministic matrix determined by the NLoS clusters and the RA orientation-dependent gains $g_{q,n}^{\mathrm{c}}(\boldsymbol{f}_n)$, and $\boldsymbol{\varpi}_k\sim\mathcal{CN}(\boldsymbol{0},\boldsymbol{I}_{Q})$ stacks the independent and identically distributed (i.i.d.) Gaussian path gains associated with these clusters. Consequently, we have
\begin{align}
&\boldsymbol{h}_k(\boldsymbol{F}) \sim \mathcal{CN}\big(\boldsymbol{\mu}_k(\boldsymbol{F}), \boldsymbol{R}_k(\boldsymbol{F})\big),
\label{eq:hk_distribution}
\\
&\boldsymbol{R}_k(\boldsymbol{F}) \triangleq \boldsymbol{B}_k(\boldsymbol{F})\boldsymbol{B}_k^H(\boldsymbol{F}).
\end{align}
The pair $\big(\boldsymbol{\mu}_k(\boldsymbol{F}),\boldsymbol{R}_k(\boldsymbol{F})\big)$ characterizes the Rician channel statistics of user $k$ under a given $\boldsymbol{F}$, playing a central role in the subsequent channel estimation and achievable rate analysis.

\vspace{-9pt}
\subsection{Uplink Signal Model}
\vspace{-1pt}

The uplink received signal at the RA array is written as
\begin{align}
    \boldsymbol{y} = \sum_{k\in \mathcal{K}} \sqrt{p_k} \boldsymbol{h}_k(\boldsymbol{F}) x_k + \boldsymbol{n},
    \label{eq:uplink_signal}
\end{align}
where $p_k$ is the transmit power of user $k$ subject to $0\le p_k \le P_k^{\max}$, $x_k$ is the normalized information symbol transmitted by user $k$ with $\mathbb{E}\{|x_k|^2\}=1$ and $\boldsymbol{x} \triangleq [x_1,\dots,x_K]^T$, and $\boldsymbol{n} \sim \mathcal{CN}(\boldsymbol{0},\sigma^2 \boldsymbol{I}_N)$ is the additive white Gaussian noise (AWGN) at the BS.
In each coherence block with a rotation configuration $\boldsymbol{F}$, the BS first performs uplink channel estimation based on pilot transmissions to obtain an estimate $\hat{\boldsymbol{H}}(\boldsymbol{F})$ of $\boldsymbol{H}(\boldsymbol{F})$, where $\boldsymbol{H}(\boldsymbol{F}) \triangleq [\boldsymbol{h}_1(\boldsymbol{F}),\dots,\boldsymbol{h}_K(\boldsymbol{F})] \in \mathbb{C}^{N \times K}$ is the channel matrix stacking channels of all users, and then constructs a linear receive combining matrix $\boldsymbol{V} \triangleq [ \boldsymbol{v}_1,\dots,\boldsymbol{v}_K ]
 \in \mathbb{C}^{N \times K}$, so that the detected symbol for user $k$ is given by
\begin{equation}
    \varsigma_k = \boldsymbol{v}_k^H \boldsymbol{y}, \quad \forall k \in \mathcal{K}.
    \label{eq:rx_combining}
\end{equation}
Different choices of $\boldsymbol{V}$ lead to different linear receivers, e.g., MRC or wZF, in Section~\ref{sec:rate_analysis}. 
This operation motivates a two-timescale design: in each coherence block, the BS updates the receive beamforming matrix $\boldsymbol{V}$ based on the instantaneous channel estimates, while the RA orientations $\boldsymbol{F}$ are updated based on slowly varying channel statistics affecting $\big\{\boldsymbol{\mu}_k(\boldsymbol{F}),\boldsymbol{R}_k(\boldsymbol{F})\big\}_{k\in\mathcal{K}}$ determined by propagation geometries. 

Before designing the large-timescale RA orientations, it is necessary to characterize the small-timescale CSI available at the BS for a fixed orientation matrix \(F\). In particular, since \(F\) reshapes both the Rician mean and covariance, it also affects the LMMSE channel estimate and the associated estimation-error statistics. The next section derives these orientation-dependent estimation statistics, which will later enter the MRC and wZF rate expressions, as well as the RA rotation design.

\vspace{-6pt}
\section{Channel Estimation}\label{sec:channel_estimation}
\vspace{-2pt}
In this section, we consider a time-division duplexing (TDD) frame structure, where each coherence block of length $T_{\mathrm{c}}$ symbol intervals is partitioned into a pilot phase of length $\tau_{\mathrm{p}}$ symbols and a data transmission phase of length $T_{\mathrm{d}}=T_{\mathrm{c}}-\tau_{\mathrm{p}}$. Within each coherence block, $\boldsymbol{F}$ is fixed and determined by statistical CSI, and the geometry $\{\boldsymbol{u}_k\}$, $\{\boldsymbol{w}_n\}$ and $\{\boldsymbol{c}_q\}$ of channel $\boldsymbol{H}(\boldsymbol{F})$ remains constant, while the fast-varying fading coefficients $\{\varpi_{k,q}\}$ required to be estimated. 

\vspace{-6pt}
\subsection{Uplink Training Protocol}
\vspace{-1pt}

During the pilot phase, each user transmits a predefined pilot sequence $\boldsymbol{\varphi}_k\in\mathbb{C}^{\tau_{\mathrm{p}}\times 1}$ with power $p_k^{\mathrm{tr}}$. The pilot sequences, collected by the matrix
$\boldsymbol{\varPhi}\triangleq\big[\boldsymbol{\varphi}_1,\ldots,\boldsymbol{\varphi}_K\big]\in\mathbb{C}^{\tau_{\mathrm{p}}\times K}$, are mutually orthogonal between different users, satisfying $\boldsymbol{\varPhi}^H \boldsymbol{\varPhi} = \tau_\mathrm{p} \boldsymbol{I}_K$ with $\tau_{\mathrm{p}}\geq K$.
Stacking the received pilot signals over $\tau_\mathrm{p}$ symbols at the $N$-element RA array yields the received pilot signal matrix \cite{ref25,ref26}
\begin{align}
\boldsymbol{Y}_{\mathrm{p}}(\boldsymbol{F})
&= \sum_{k=1}^K \sqrt{p_k^{\mathrm{tr}}} \boldsymbol{h}_k(\boldsymbol{F}) \boldsymbol{\varphi}_k^{T}
+ \boldsymbol{Z}_{\mathrm{p}}, \label{eq:Yp}
\end{align}
where $\boldsymbol{Z}_{\mathrm{p}}\in\mathbb{C}^{N\times\tau_\mathrm{p}}$ collects the AWGN during the pilot phase as a matrix whose entries are i.i.d. as $\mathcal{CN}(0,\sigma^2)$.
To obtain a sufficient statistic for estimating the channel of user $k$, the BS correlates the received pilot matrix $\boldsymbol{Y}_{\mathrm{p}}(\boldsymbol{F})$ with the conjugate pilot sequence $\boldsymbol{\varphi}_k^*$ and exploits the orthogonality by
\begin{align}
\boldsymbol{y}_{\mathrm{p},k}(\boldsymbol{F}) = \frac{1}{\sqrt{\tau_\mathrm{p}}}\boldsymbol{Y}_{\mathrm{p}}(\boldsymbol{F})\boldsymbol{\varphi}_k^*= \sqrt{\tau_\mathrm{p} p_k^{\mathrm{tr}}} \boldsymbol{h}_k(\boldsymbol{F}) + \boldsymbol{n}_{\mathrm{p},k},
\label{eq:ypk}
\end{align}
with $\boldsymbol{n}_{\mathrm{p},k}
\triangleq \frac{1}{\sqrt{\tau_\mathrm{p}}}\boldsymbol{Z}_{\mathrm{p}}\boldsymbol{\varphi}_k^*
\sim \mathcal{CN}(\boldsymbol{0},\sigma^2\boldsymbol{I}_N)$ being independent of $\boldsymbol{h}_k(\boldsymbol{F})$. Hence, for each user $k$, the effective pilot observation is an $N$-dimensional noisy linear observation of the channel vector $\boldsymbol{h}_k(\boldsymbol{F})$ with an effective pilot energy $\tau \mathrm{p}_k^{\mathrm{tr}}$.

\vspace{-6pt}
\subsection{LMMSE Channel Estimator}
\vspace{-3pt}

Given $\boldsymbol{F}$, the prior distribution of the channel vector is \eqref{eq:hk_distribution}. Then, we can characterize the joint distribution of $\boldsymbol{h}_k(\boldsymbol{F})$ and $\boldsymbol{y}_{\mathrm{p},k}(\boldsymbol{F})$. Denote the statistical expectation by $\mathbb{E}[\cdot]$. The mean of $\boldsymbol{y}_{\mathrm{p},k}(\boldsymbol{F})$ is
\begin{align}
\bar{\boldsymbol{y}}_{\mathrm{p},k}(\boldsymbol{F})
&\triangleq \mathbb{E}\big[\boldsymbol{y}_{\mathrm{p},k}(\boldsymbol{F})\big]
= \sqrt{\tau p_k^{\mathrm{tr}}} \boldsymbol{\mu}_k(\boldsymbol{F}),
\label{eq:mean_ypk}
\end{align}
and its covariance matrix is
\begin{align}
\boldsymbol{C}_{\boldsymbol{y}_k}(\boldsymbol{F})
&\triangleq \mathbb{E}\left[(\boldsymbol{y}_{\mathrm{p},k}(\boldsymbol{F})-\bar{\boldsymbol{y}}_{\mathrm{p},k}(\boldsymbol{F}))
(\boldsymbol{y}_{\mathrm{p},k}(\boldsymbol{F})-\bar{\boldsymbol{y}}_{\mathrm{p},k}(\boldsymbol{F}))^{H}\right]\nonumber
\\
&= \tau p_k^{\mathrm{tr}}\boldsymbol{R}_k(\boldsymbol{F}) + \sigma^2\boldsymbol{I}_N. \label{eq:Cov_ypk}
\end{align}
The cross-covariance between $\boldsymbol{h}_k(\boldsymbol{F})$ and $\boldsymbol{y}_{\mathrm{p},k}(\boldsymbol{F})$ is 
\begin{align}
\boldsymbol{C}_{\boldsymbol{h}_k\boldsymbol{y}_k}(\boldsymbol{F})
&\triangleq \mathbb{E}\left[(\boldsymbol{h}_k(\boldsymbol{F})-\boldsymbol{\mu}_k(\boldsymbol{F}))
(\boldsymbol{y}_{\mathrm{p},k}(\boldsymbol{F})-\bar{\boldsymbol{y}}_{\mathrm{p},k}(\boldsymbol{F}))^{H}\right]\nonumber
\\
&= \sqrt{\tau p_k^{\mathrm{tr}}} \boldsymbol{R}_k(\boldsymbol{F}). \label{eq:Cov_hk_ypk}
\end{align}
Since the joint distribution of $\boldsymbol{h}_k(\boldsymbol{F})$ and $\boldsymbol{y}_{\mathrm{p},k}(\boldsymbol{F})$ is complex Gaussian, the LMMSE estimator coincides with the MMSE estimator. Using the standard linear MMSE, the channel estimate of user $k$, given the observation $\boldsymbol{y}_{\mathrm{p},k}(\boldsymbol{F})$, is
\begin{align}
\hat{\boldsymbol{h}}_k(\boldsymbol{F})
&\triangleq \mathbb{E}[\boldsymbol{h}_k(\boldsymbol{F})\mid \boldsymbol{y}_{\mathrm{p},k}(\boldsymbol{F})]\nonumber
\\
&= \boldsymbol{\mu}_k(\boldsymbol{F})
+ \boldsymbol{C}_{\boldsymbol{h}_k\boldsymbol{y}_k}(\boldsymbol{F}) 
\boldsymbol{C}_{\boldsymbol{y}_k}^{-1}(\boldsymbol{F}) 
\big(\boldsymbol{y}_{\mathrm{p},k}(\boldsymbol{F})-\bar{\boldsymbol{y}}_{\mathrm{p},k}(\boldsymbol{F})\big)\nonumber
\\
&= \boldsymbol{\mu}_k(\boldsymbol{F}) 
+ \sqrt{\tau p_k^{\mathrm{tr}}} \boldsymbol{R}_k(\boldsymbol{F})\big(\tau p_k^{\mathrm{tr}}\boldsymbol{R}_k(\boldsymbol{F})+\sigma^2\boldsymbol{I}_N\big)^{-1}\nonumber
\\
&\quad \times \big(\boldsymbol{y}_{\mathrm{p},k}(\boldsymbol{F})-\sqrt{\tau p_k^{\mathrm{tr}}} \boldsymbol{\mu}_k(\boldsymbol{F})\big).
\label{eq:LMMSE_estimator}
\end{align}
\eqref{eq:LMMSE_estimator} captures the dependence of the channel estimate $\hat{\boldsymbol{h}}_k(\boldsymbol{F})$ on the RA orientation $\boldsymbol{F}$ via the channel statistics $\boldsymbol{\mu}_k(\boldsymbol{F})$ and $\boldsymbol{R}_k(\boldsymbol{F})$, which are determined by user and cluster geometries as well as large-scale fading parameters with a fixed $\boldsymbol{F}$.

\vspace{-9pt}
\subsection{Estimation Error Statistics}
\vspace{-3pt}
Define the channel estimation error of user $k$ as
\begin{align}
\tilde{\boldsymbol{h}}_k(\boldsymbol{F})
\triangleq \boldsymbol{h}_k(\boldsymbol{F}) - \hat{\boldsymbol{h}}_k(\boldsymbol{F}). 
\label{eq:error_def}
\end{align}
By standard properties of LMMSE estimation with Gaussian priors, $\tilde{\boldsymbol{h}}_k(\boldsymbol{F})$ is independent of $\hat{\boldsymbol{h}}_k(\boldsymbol{F})$ and follows a complex Gaussian distribution
\begin{align}
\tilde{\boldsymbol{h}}_k(\boldsymbol{F})
\sim \mathcal{CN}\big(\boldsymbol{0}, \boldsymbol{C}_{\mathrm{e},k}(\boldsymbol{F})\big),
  \label{eq:tilde_h_distribution}
\end{align}
where the error covariance matrix is
\begin{align}
\!&\boldsymbol{C}_{\mathrm{e},k}(\boldsymbol{F})
\triangleq \mathbb{E}\big[\tilde{\boldsymbol{h}}_k(\boldsymbol{F})\tilde{\boldsymbol{h}}_k^{H}(\boldsymbol{F})\big]\nonumber\\
\!&= \boldsymbol{R}_k(\boldsymbol{F})
- \boldsymbol{C}_{\boldsymbol{h}_k\boldsymbol{y}_k}(\boldsymbol{F}) 
\boldsymbol{C}_{\boldsymbol{y}_k}^{-1}(\boldsymbol{F}) 
\boldsymbol{C}_{\boldsymbol{y}_k\boldsymbol{h}_k}(\boldsymbol{F})\nonumber
\\
\!&= \!\boldsymbol{R}_k(\boldsymbol{F})\!
-\! \tau p_k^{\mathrm{tr}} 
\boldsymbol{R}_k(\boldsymbol{F})
\big(\tau p_k^{\mathrm{tr}}\boldsymbol{R}_k(\boldsymbol{F})\!+\!\sigma^2\boldsymbol{I}_N\big)^{-1}\!
\boldsymbol{R}_k(\boldsymbol{F}). \!\!\! \nonumber
\\
\!&=\boldsymbol{R}_k(\boldsymbol{F})\big(\boldsymbol{I}_N+\tfrac{\tau p_k^{\mathrm{tr}}}{\sigma^2}\boldsymbol{R}_k(\boldsymbol{F})\big)^{-1},
\label{eq:error_covariance}
\end{align}
In addition, the covariance of the channel estimate is given by
\begin{align}
\boldsymbol{C}_{\hat{h}_k}(\boldsymbol{F})
&\triangleq \mathbb{E}\big[(\hat{\boldsymbol{h}}_k(\boldsymbol{F})-\boldsymbol{\mu}_k(\boldsymbol{F}))
(\hat{\boldsymbol{h}}_k(\boldsymbol{F})-\boldsymbol{\mu}_k(\boldsymbol{F}))^{H}\big]\nonumber
\\
&= \boldsymbol{R}_k(\boldsymbol{F}) - \boldsymbol{C}_{\mathrm{e},k}(\boldsymbol{F})\nonumber
\\
&= \tau p_k^{\mathrm{tr}} 
\boldsymbol{R}_k(\boldsymbol{F})
(\tau p_k^{\mathrm{tr}}\boldsymbol{R}_k(\boldsymbol{F})\!+\!\sigma^2\boldsymbol{I}_N)^{-1}
\boldsymbol{R}_k(\boldsymbol{F}), \!\!\!
\label{eq:est_covariance}
\end{align}
and $\hat{h}_k(\boldsymbol{F})$ follows complex Gaussian distribution
\begin{align}
  \hat{\boldsymbol{h}}_k(\boldsymbol{F})  \sim
  \mathcal{CN}\big(\boldsymbol{\mu}_k(\boldsymbol{F}),
     \boldsymbol{C}_{\hat{h}_k}(\boldsymbol{F})\big),
  \label{eq:hat_h_distribution}
\end{align}

For later evaluation on the estimation error, let $a_k\triangleq \tau p_k^{\mathrm{tr}}$ and for the Hermitian positive semi-definite matrix $\boldsymbol{R}_k(\boldsymbol{F})$,  
denote its Moore-Penrose pseudo-inverse by $\boldsymbol{R}_k^{\dagger}(\boldsymbol{F})$ with the orthogonal projector $\boldsymbol{P}_k(\boldsymbol{F})\triangleq \boldsymbol{R}_k(\boldsymbol{F})\,\boldsymbol{R}_k^{\dagger}(\boldsymbol{F})$ onto the range space $\mathcal{R}(\boldsymbol{R}_k(\boldsymbol{F}))$, which is the subspace spanned by the eigenvectors corresponding to its strictly positive eigenvalues. 
Further define the \emph{normalized error covariance matrix} as
\begin{align}
\boldsymbol{E}_k(\boldsymbol{F})
\triangleq
\boldsymbol{R}_k^{\dagger/2}(\boldsymbol{F})\,\boldsymbol{C}_{e,k}(\boldsymbol{F})\,\boldsymbol{R}_k^{\dagger/2}(\boldsymbol{F}),
\end{align}
where $\boldsymbol{R}_k^{\dagger/2}$ is the Hermitian square root of $\boldsymbol{R}_k^{\dagger}(\boldsymbol{F})$. Accordingly, the normalized mean-square error (NMSE) over the active subspace is defined as
\begin{align}
\mathrm{NMSE}_k(\boldsymbol{F})
\triangleq
\begin{cases}
\frac{1}{\iota_k}\mathrm{tr}\big(\boldsymbol{E}_k(\boldsymbol{F})\big), & \iota_k>0,\\
0, & \iota_k=0,
\end{cases}
\end{align}
with $\iota_k\triangleq \mathrm{rank}\big(\boldsymbol{R}_k(\boldsymbol{F})\big)$.

\begin{lemma}
\label{lem:NMSE}
For any $\boldsymbol{R}_k(\boldsymbol{F})\succeq \boldsymbol{0}$, the normalized error covariance satisfies
\begin{align}
\boldsymbol{E}_k(\boldsymbol{F})
=
\boldsymbol{P}_k(\boldsymbol{F})
\Big(\boldsymbol{I}_N+\tfrac{a_k}{\sigma^2}\boldsymbol{R}_k(\boldsymbol{F})\Big)^{-1}
\boldsymbol{P}_k(\boldsymbol{F}),
\label{eq:NMSE_matrix}
\end{align}
Moreover, if $\{\lambda_{k,i}(\boldsymbol{F})\}_{i=1}^{\iota_k}$ are the non-zero eigenvalues of $\boldsymbol{R}_k(\boldsymbol{F})$ , then the eigenvalues of $\boldsymbol{E}_k(\boldsymbol{F})$ over the active subspace are $\big\{(1+\tfrac{a_k}{\sigma^2}\lambda_{k,i}(\boldsymbol{F}))^{-1}\big\}_{i=1}^{\iota_k}$. Consequently, $\mathrm{NMSE}_k(\boldsymbol{F})$ decreases as the non-zero eigenvalues of $\boldsymbol{R}_k(\boldsymbol{F})$ increase.
In particular, if two covariance matrices with rotations $\boldsymbol{F}_1$ and $\boldsymbol{F}_2$, respectively, satisfy $\boldsymbol{R}_k(\boldsymbol{F}_1)\succeq \boldsymbol{R}_k(\boldsymbol{F}_2)\succeq \boldsymbol{0}$ in the Loewner order and share the same active subspace, i.e., $\mathcal{R}(\boldsymbol{R}_k(\boldsymbol{F}_1))=\mathcal{R}(\boldsymbol{R}_k(\boldsymbol{F}_2))$, then
$\boldsymbol{E}_k(\boldsymbol{F}_1)\preceq \boldsymbol{E}_k(\boldsymbol{F}_2)$ and hence $\mathrm{NMSE}_k(\boldsymbol{F}_1)\le \mathrm{NMSE}_k(\boldsymbol{F}_2)$.
\end{lemma}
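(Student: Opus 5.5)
The plan is to work entirely in a single eigen-decomposition of $\boldsymbol{R}_k(\boldsymbol{F})$, where every matrix in the statement becomes diagonal. Write $\boldsymbol{R}_k=\boldsymbol{U}\boldsymbol{\Lambda}\boldsymbol{U}^H$, with $\boldsymbol{U}=[\boldsymbol{U}_1,\boldsymbol{U}_0]$ unitary and $\boldsymbol{\Lambda}=\diag(\boldsymbol{\Lambda}_1,\boldsymbol{0})$, where $\boldsymbol{\Lambda}_1=\diag(\lambda_{k,1},\dots,\lambda_{k,\iota_k})\succ\boldsymbol{0}$. The dependence on $\boldsymbol{F}$ is suppressed throughout. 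Then
\[
\boldsymbol{R}_k^{\dagger}=\boldsymbol{U}_1\boldsymbol{\Lambda}_1^{-1}\boldsymbol{U}_1^H,\quad \boldsymbol{R}_k^{\dagger/2}=\boldsymbol{U}_1\boldsymbol{\Lambda}_1^{-1/2}\boldsymbol{U}_1^H,\quad \boldsymbol{P}_k=\boldsymbol{U}_1\boldsymbol{U}_1^H .
\]
From \eqref{eq:error_covariance}, $\boldsymbol{C}_{\mathrm{e},k}=\boldsymbol{U}_1\boldsymbol{\Lambda}_1(\boldsymbol{I}+\tfrac{a_k}{\sigma^2}\boldsymbol{\Lambda}_1)^{-1}\boldsymbol{U}_1^H$, because $\boldsymbol{R}_k$ and $(\boldsymbol{I}_N+\tfrac{a_k}{\sigma^2}\boldsymbol{R}_k)^{-1}$ commute and are simultaneously diagonalized by $\boldsymbol{U}$.

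Sandwiching this with $\boldsymbol{R}_k^{\dagger/2}$ gives $\boldsymbol{E}_k=\boldsymbol{U}_1(\boldsymbol{I}+\tfrac{a_k}{\sigma^2}\boldsymbol{\Lambda}_1)^{-1}\boldsymbol{U}_1^H$. Separately, $(\boldsymbol{I}_N+\tfrac{a_k}{\sigma^2}\boldsymbol{R}_k)^{-1}=\boldsymbol{U}_1(\boldsymbol{I}+\tfrac{a_k}{\sigma^2}\boldsymbol{\Lambda}_1)^{-1}\boldsymbol{U}_1^H+\boldsymbol{U}_0\boldsymbol{U}_0^H$, and projecting both sides with $\boldsymbol{P}_k$ removes the $\boldsymbol{U}_0$ block. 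The two expressions then agree, which establishes \eqref{eq:NMSE_matrix}.

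The same representation gives the spectral claims. The eigenvalues of $\boldsymbol{E}_k$ on $\mathcal{R}(\boldsymbol{R}_k)$ are $(1+\tfrac{a_k}{\sigma^2}\lambda_{k,i})^{-1}$, and the remaining eigenvalues are zero. Hence $\mathrm{NMSE}_k=\frac{1}{\iota_k}\sum_i(1+\tfrac{a_k}{\sigma^2}\lambda_{k,i})^{-1}$. Each summand is strictly decreasing in $\lambda_{k,i}$, so the monotonicity statement follows; here the rank $\iota_k$ is held fixed, and the case $\iota_k=0$ is trivial.

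For the Loewner-order part, let $\mathcal{S}$ denote the common range and $\boldsymbol{P}$ the projector onto it. I will compress to $\mathcal{S}$ using an orthonormal basis $\boldsymbol{U}_1$ of $\mathcal{S}$. Set $\boldsymbol{A}_j\triangleq\boldsymbol{U}_1^H\boldsymbol{R}_k(\boldsymbol{F}_j)\boldsymbol{U}_1$ for $j=1,2$. These are $\iota_k\times\iota_k$ positive definite matrices with $\boldsymbol{A}_1\succeq\boldsymbol{A}_2$, since compression preserves the Loewner order. Because $\boldsymbol{R}_k(\boldsymbol{F}_j)$ maps $\mathcal{S}$ into itself and vanishes on $\mathcal{S}^\perp$, the matrix $\boldsymbol{I}_N+c\boldsymbol{R}_k(\boldsymbol{F}_j)$ is block diagonal in the basis $[\boldsymbol{U}_1,\boldsymbol{U}_0]$, where $c=a_k/\sigma^2$. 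Applying \eqref{eq:NMSE_matrix} then yields $\boldsymbol{E}_k(\boldsymbol{F}_j)=\boldsymbol{U}_1(\boldsymbol{I}+c\boldsymbol{A}_j)^{-1}\boldsymbol{U}_1^H$.

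The ordering now follows from operator monotonicity of inversion on positive definite matrices. From $\boldsymbol{I}+c\boldsymbol{A}_1\succeq\boldsymbol{I}+c\boldsymbol{A}_2\succ\boldsymbol{0}$ we get $(\boldsymbol{I}+c\boldsymbol{A}_1)^{-1}\preceq(\boldsymbol{I}+c\boldsymbol{A}_2)^{-1}$. A one-line proof of this fact is to congruence by $\boldsymbol{X}_2^{-1/2}$, where $\boldsymbol{X}_j\triangleq\boldsymbol{I}+c\boldsymbol{A}_j$, which reduces it to the scalar case via eigenvalues $\ge1$. Lifting back with $\boldsymbol{U}_1$ preserves the order, so $\boldsymbol{E}_k(\boldsymbol{F}_1)\preceq\boldsymbol{E}_k(\boldsymbol{F}_2)$. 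Since $\iota_k$ is the same for both rotations, taking traces and dividing by $\iota_k$ gives the NMSE inequality.

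The only delicate point is the role of the shared-range hypothesis. It is what makes both $\boldsymbol{E}_k(\boldsymbol{F}_j)$ live on the same subspace and makes the $\boldsymbol{P}_k$ factors identical, so that the comparison is not confounded by differing projectors or normalizations. The plan should state explicitly where this hypothesis is used, namely in the block-diagonal step and in the common $\iota_k$; everything else is routine.
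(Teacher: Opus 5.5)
Your proposal is correct and follows essentially the same route as the paper: a spectral decomposition of $\boldsymbol{R}_k(\boldsymbol{F})$ gives \eqref{eq:NMSE_matrix} and the eigenvalue claims, and for the Loewner-order part you combine operator monotonicity of the matrix inverse with the shared projector. The only difference is that the paper applies $(\boldsymbol{I}_N+c\boldsymbol{R}_k(\boldsymbol{F}_1))^{-1}\preceq(\boldsymbol{I}_N+c\boldsymbol{R}_k(\boldsymbol{F}_2))^{-1}$, with $c=a_k/\sigma^2$, directly in $N\times N$ and then sandwiches by the common $\boldsymbol{P}_k$, so your compression to the $\iota_k$-dimensional range is harmless but unnecessary; your remark that the monotonicity claim needs the rank $\iota_k$ held fixed is a useful precision the paper leaves implicit.
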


\begin{IEEEproof}
For brevity, we omit the explicit dependence on $\boldsymbol{F}$.
Pre-multiplying and post-multiplying by $\boldsymbol{R}_k^{\dagger/2}$ for \eqref{eq:error_covariance} yields
\begin{align*}
\boldsymbol{E}_k
&=\boldsymbol{R}_k^{\dagger/2}\boldsymbol{C}_{e,k}\boldsymbol{R}_k^{\dagger/2}
=\boldsymbol{R}_k^{\dagger/2}\boldsymbol{R}_k\boldsymbol{R}_k^{\dagger/2}
\big(\boldsymbol{I}_N+\tfrac{a_k}{\sigma^2}\boldsymbol{R}_k\big)^{-1}\\
&=\boldsymbol{P}_k
\big(\boldsymbol{I}_N+\tfrac{a_k}{\sigma^2}\boldsymbol{R}_k\big)^{-1}
\boldsymbol{P}_k,
\end{align*}
which proves \eqref{eq:NMSE_matrix}. The eigenvalue characterization follows by applying the spectral decomposition of $\boldsymbol{R}_k$. For each non-zero eigenvalue $\lambda_{k,i}>0$, the corresponding normalized error eigenvalue equals $(1+\frac{a_k}{\sigma^2}\lambda_{k,i})^{-1}$,
and $\boldsymbol{E}_k$ vanishes on the null space $\mathrm{Null}(\boldsymbol{R}_k)$ due to the projection $\boldsymbol{P}_k$.
Finally, when $\mathcal{R}(\boldsymbol{R}_1)=\mathcal{R}(\boldsymbol{R}_2)$, the projectors coincide.
Since $\boldsymbol{R}_1\succeq \boldsymbol{R}_2\succeq \boldsymbol{0}$ implies
$\big(\boldsymbol{I}+c\boldsymbol{R}_1\big)^{-1}\preceq \big(\boldsymbol{I}+c\boldsymbol{R}_2\big)^{-1}$ for $c>0$,
pre- or post-multiplying by the common projector yields $\boldsymbol{E}_1\preceq \boldsymbol{E}_2$ and thus $\mathrm{NMSE}_1\le \mathrm{NMSE}_2$ after taking traces.
\end{IEEEproof}

\vspace{-3pt}
\begin{remark}  \label{rem:NMSE_analysis}
Lemma~\ref{lem:NMSE} shows that RA rotation can improve normalized estimation quality by enlarging $\boldsymbol{R}_k(\boldsymbol{F})$, e.g., steering antenna boresights toward dominant scatterer directions. However, achievable rate depends on both $\boldsymbol{\mu}_k(\boldsymbol{F})$ and multiuser coupling terms that involve cross-user statistics and the combiner structure. Hence the rotation that minimizes NMSE is generally not the same as the one maximizing sum rate, whose closed-form expressions are analyzed in Section \ref{sec:rate_analysis}.
\end{remark}
\begin{remark} The LMMSE estimator \eqref{eq:LMMSE_estimator} depends on $\boldsymbol{F}$ only through the statistics $\boldsymbol{\mu}_k(\boldsymbol{F})$ and $\boldsymbol{R}_k(\boldsymbol{F})$. In the proposed two-timescale framework, $\boldsymbol{F}$ is updated on a slow timescale based on statistical CSI, and the same statistically matched LMMSE estimator is applied across many coherence blocks.
\end{remark}

\vspace{-6pt}
\section{Two-Timescale Optimization Problem}
\label{sec:problem_formulation}

In this section, we formulate the large-timescale RA rotation design
based on the channel estimates obtained in \eqref{eq:LMMSE_estimator}.
The key feature of the considered two-timescale architecture is that the
short-term receive combiner is updated in each coherence block using the
instantaneous channel estimate, whereas the RA orientation matrix
$\boldsymbol{F}$ is optimized only from slowly varying statistical CSI.

Recall from \eqref{eq:uplink_signal} and \eqref{eq:rx_combining} that,
for a given coherence block and a fixed $\boldsymbol{F}$, the BS applies
a linear combining matrix
\begin{align}
  \boldsymbol{V}(\hat{\boldsymbol{H}})
  \triangleq
  \big[
    \boldsymbol{v}_1(\hat{\boldsymbol{H}}),\dots,
    \boldsymbol{v}_K(\hat{\boldsymbol{H}})
  \big]
  \in\mathbb{C}^{N\times K},
\end{align}
being a deterministic function of the estimated channel matrix
\begin{align}
  \hat{\boldsymbol{H}}(\boldsymbol{F})
  \triangleq
  \big[
    \hat{\boldsymbol{h}}_1(\boldsymbol{F}),\dots,
    \hat{\boldsymbol{h}}_K(\boldsymbol{F})
  \big].
\end{align}
The $k$-th column $\boldsymbol{v}_k(\hat{\boldsymbol{H}})$ is the combining
vector used to detect user $k$. Typical examples include the MRC receiver, which uses conjugate beamforming to the channel, and the wZF receiver, which eliminates multiuser interference. Given $\boldsymbol{v}_k(\hat{\boldsymbol{H}})$, the detection statistic for user $k$ is
\begin{align}
  \varsigma_k= &\sqrt{p_k} \boldsymbol{v}_k^{H}(\hat{\boldsymbol{H}})
     \boldsymbol{h}_k(\boldsymbol{F}) x_k + \sum_{i\neq k} \sqrt{p_i} 
     \boldsymbol{v}_k^{H}(\hat{\boldsymbol{H}})   \boldsymbol{h}_i(\boldsymbol{F}) x_i 
     \nonumber
     \\
     &+ \boldsymbol{v}_k^{H}(\hat{\boldsymbol{H}}) \boldsymbol{z}.
  \label{eq:rk_def}
\end{align}
The corresponding instantaneous SINR of user $k$, conditioned on the
channels and their estimates, is
\begin{align}
  \gamma_k\big(\boldsymbol{F},\hat{\boldsymbol{H}}\big)\!
  &\!\triangleq
  \frac{
    p_k 
    \big|
      \boldsymbol{v}_k^{H}(\hat{\boldsymbol{H}})
      \boldsymbol{h}_k(\boldsymbol{F})
    \big|^2
  }{
    \sum_{i\neq k}
      p_i 
      \big|
        \boldsymbol{v}_k^{H}(\hat{\boldsymbol{H}})
        \boldsymbol{h}_i(\boldsymbol{F})
      \big|^2
    \!+\! \sigma^2\big\|\boldsymbol{v}_k(\hat{\boldsymbol{H}})\big\|^2
  }. \!\!\!
  \label{eq:sinr_inst}
\end{align}

To obtain a tractable large-timescale metric, we first define, for a given channel estimate $\hat{\boldsymbol{H}}(\boldsymbol{F})$ and combining matrix $\boldsymbol{V}(\hat{\boldsymbol{H}})$, the UatF SINR of user $k$ as in \eqref{eq:sinr_uatf}, where the BS and users
treat the conditional mean of the effective channel gain as
deterministic and regard all remaining randomness as additional uncorrelated
Gaussian noise \cite{ref27,ref28}. This yields a tractable lower bound on the achievable uplink rates that depends only on the second-order statistics of
$\hat{\boldsymbol{h}}_k(\boldsymbol{F})$ and
$\tilde{\boldsymbol{h}}_k(\boldsymbol{F})$.
\begin{figure*}
\begin{align}
&\qquad \qquad \bar{\gamma}_k\left( \boldsymbol{F},\boldsymbol{\hat{H}} \right) = \frac{p_k \big|\mathbb{E}\big[ \boldsymbol{v}_{k}^H( \boldsymbol{\hat{H}}) \boldsymbol{h}_k( \boldsymbol{F} )  | \boldsymbol{\hat{H}}( \boldsymbol{F})\big] \big|^2}{p_k \text{Var}\big( \boldsymbol{v}_{k}^H( \boldsymbol{\hat{H}}) \boldsymbol{h}_k\left( \boldsymbol{F} \right)  | \boldsymbol{\hat{H}}( \boldsymbol{F} )\big) +\sum_{i\ne k}{p_i} \mathbb{E}\big[ |\boldsymbol{v}_{k}^H( \boldsymbol{\hat{H}}) \boldsymbol{h}_i (\boldsymbol{F}) |^2 | \boldsymbol{\hat{H}}( \boldsymbol{F}) \big] +\sigma ^2 \big|\boldsymbol{v}_k( \boldsymbol{\hat{H}} ) \big|^2},
\label{eq:sinr_uatf}
\\
&\overline{\ \ \ \ \ \ \ \ \ \ \ \ \ \ \ \ \ \ \ \ \ \ \ \ \ \ \ \ \ \ \ \ \ \ \ \ \ \ \ \ \ \ \ \ \ \ \ \ \ \ \ \ \ \ \ \ \ \ \ \ \ \ \ \ \ \ \ \ \ \ \ \ \ \ \ \ \ \ \ \ \ \ \ \ \ \ \ \ \ \ \ \ \ \ \ \ \ \ \ \ \ \ \ \ \ \ \ \ \ \ \ \ \ \ \ \ \ \ \ \ \ \ \ \ \ \ \ \ \ \ \ \ \ \ \ \ \ \ \ \ \ \ \ \ } \nonumber
\end{align}
\vspace{-36pt}
\end{figure*}
The expectation in \eqref{eq:sinr_uatf} is taken with respect to the joint distribution of the small-scale fading, the channel estimates, and the
estimation errors, all of which depend on $\boldsymbol{F}$ via the Rician
statistics $\big(\boldsymbol{\mu}_k(\boldsymbol{F}),\boldsymbol{R}_k(\boldsymbol{F})\big)$
and the estimator \eqref{eq:LMMSE_estimator}.
Let $\eta\triangleq 1-\tau_\mathrm{p}/T_{\mathrm{c}}$ denote the pre-log factor
accounting for the pilot training overhead. The large-timescale conditional achievable rate metric is then defined as
\vspace{-4pt}
\begin{align}
\bar{R}_k(\boldsymbol{F})= \eta \log_2\big(1 + \bar{\gamma}_k(\boldsymbol{F},\hat{\boldsymbol{H}})\big),
  \label{eq:Rk_def}
\end{align}
and the corresponding sum rate surrogate is
\vspace{-4pt}
\begin{align}
  \bar{R}_{\text{sum}}(\boldsymbol{F})
  &\triangleq
  \sum_{k\in \mathcal{K}} \bar{R}_k(\boldsymbol{F}).
  \label{eq:Rsum_def}
\end{align}
Under the two-timescale design, the short-term combining rule $\boldsymbol{V}(\hat{\boldsymbol{H}})$ is determined by estimated instantaneous CSI, while the RA-orientation design problem, based on large-timescale CSI, maximizes $\boldsymbol{F}$ over the feasible set $\mathcal{F}$ as below
\vspace{-6pt}
\begin{subequations}
\label{prob:P1_general}
\begin{align}
  \text{(P1)}:\,\,
  \max_{\boldsymbol{F}}& \,\, \mathbb{E}\Big[ \max_{{\boldsymbol{V}}(\hat{\boldsymbol{H}})}
  \,\,
  \bar{R}_{\text{sum}}(\boldsymbol{F}) \Big]
  \label{prob:P1_obj}
  \\
   \text{s.t.}\,\, &\boldsymbol{F}\in\mathcal{F}.
  \label{prob:P1_constraint_F}
\end{align}
\end{subequations}
(P1) is a highly non-convex, stochastic optimization problem. Specifically, the expectation in \eqref{prob:P1_obj} has no simple closed form in general and cannot be directly evaluated inside an iterative optimization loop, the feasible set $\mathcal{F}$ involves unit-norm and angle constraints that make the search space highly non-convex, and the objective function related to directional radiation patterns is sophisticated and non-concave. 

To address these challenges, we first avoid the stochastic optimization by determining the combining vectors via MRC and wZF, which also draw useful insights in the RA system design. Accordingly, we derive closed-form rate approximations for MRC and wZF combining in Section~\ref{sec:rate_analysis}, denoted by $\bar{R}_k^{\mathrm{MRC}}(\boldsymbol{F})$ and $\bar{R}_k^{\mathrm{wZF}}(\boldsymbol{F})$, respectively, that depend only on statistics $\big(\boldsymbol{\mu}_k(\boldsymbol{F}),\boldsymbol{R}_k(\boldsymbol{F})\big)$ and induced LMMSE covariances. Substituting these approximations into \eqref{prob:P1_general} yields the surrogate problems with the only optimization variable $\boldsymbol{F}$
\vspace{-4pt}
\begin{subequations}
\label{prob:P2_MRC_wZF}
\begin{align}
  \text{(P2-MRC)}:\quad
  \max_{\boldsymbol{F}\in\mathcal{F}}
  &\quad
  \sum_{k\in \mathcal{K}} \bar{R}_k^{\mathrm{MRC}}(\boldsymbol{F}),
  \label{prob:P2_MRC}
  \\
  \text{(P2-wZF)}:\quad
  \max_{\boldsymbol{F}\in\mathcal{F}}
  &\quad
  \sum_{k\in \mathcal{K}} \bar{R}_k^{\mathrm{wZF}}(\boldsymbol{F}),
  \label{prob:P2_wZF}
\end{align}
\end{subequations}
which are still non-convex but piecewise-smooth constrained maximization problems over a product of spherical caps, serving as receiver-structure specializations of (P1).

\vspace{-6pt}
\begin{proposition}
\label{thm:single_user}
Assume $K=1$ and the LoS-dominant channel $\boldsymbol{R}_1(\boldsymbol{F})= \boldsymbol{0}$. Then $\hat{\boldsymbol{h}}_1(\boldsymbol{F})=\boldsymbol{h}_1(\boldsymbol{F})=\boldsymbol{\mu}_1(\boldsymbol{F})$ and MRC/wZF coincide. The large-timescale objective is maximized by independently maximizing $G(\boldsymbol{f}_n,\boldsymbol{s}_{1,n})$ for each $n$, hence
\vspace{-6pt}
\begin{align}
\boldsymbol{f}_n^\star=\Pi_{\mathcal{F}_n}(\boldsymbol{s}_{1,n}),\qquad \forall n\in\mathcal{N},
\label{eq:single_user_solution}
\end{align}
where $\Pi_{\mathcal{F}_n}(\cdot)$ denotes projection onto the spherical cap \eqref{eq:feasible_region}.
\end{proposition}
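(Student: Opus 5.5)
The argument has three steps: show the channel becomes deterministic and is perfectly estimated, reduce the rate to a separable sum of element gains, and maximize each gain over its spherical cap.

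\textbf{Step 1: deterministic channel and exact estimate.} With $\boldsymbol{R}_1(\boldsymbol{F})=\boldsymbol{0}$ we have $\boldsymbol{B}_1\boldsymbol{B}_1^H=\boldsymbol{0}$, hence $\boldsymbol{B}_1(\boldsymbol{F})=\boldsymbol{0}$. Then \eqref{eq:hk_expression} gives $\boldsymbol{h}_1(\boldsymbol{F})=\boldsymbol{\mu}_1(\boldsymbol{F})$ almost surely. Substituting $\boldsymbol{R}_1=\boldsymbol{0}$ into \eqref{eq:LMMSE_estimator} kills the data-dependent term, so $\hat{\boldsymbol{h}}_1=\boldsymbol{\mu}_1=\boldsymbol{h}_1$. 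By \eqref{eq:error_covariance}, $\boldsymbol{C}_{\mathrm{e},1}=\boldsymbol{0}$.

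\textbf{Step 2: MRC and wZF coincide.} For $K=1$, MRC uses $\boldsymbol{v}_1\propto\hat{\boldsymbol{h}}_1$. The wZF combiner has no interference to null, so it reduces to $\hat{\boldsymbol{h}}_1(\hat{\boldsymbol{h}}_1^H\hat{\boldsymbol{h}}_1)^{-1}$, a scalar multiple of the same vector. Since the SINR \eqref{eq:sinr_uatf} is scale-invariant in $\boldsymbol{v}_1$, both receivers give the same SINR.

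\textbf{Step 3: reduce the objective to a separable sum.} In \eqref{eq:sinr_uatf} the variance term vanishes because the channel is deterministic given $\hat{\boldsymbol{H}}$, and the interference sum is empty. Hence $\bar{\gamma}_1=p_1\|\boldsymbol{\mu}_1(\boldsymbol{F})\|^2/\sigma^2$. By \eqref{eq:LoS_gain}--\eqref{eq:hk_n_LoS},
\[
\|\boldsymbol{\mu}_1(\boldsymbol{F})\|^2=\sum_{n}\frac{\varrho\, G(\boldsymbol{f}_n,\boldsymbol{s}_{1,n})}{4\pi r_{1,n}^2}.
\]
The rate is $\eta\log_2(1+\bar{\gamma}_1)$, which is strictly increasing in $\bar{\gamma}_1$. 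The outer expectation in (P1) is trivial because nothing is random. The objective is therefore a sum of nonnegative terms, each depending only on $\boldsymbol{f}_n$, and the constraint set $\mathcal{F}=\prod_n\mathcal{F}_n$ is a product. So the problem decouples into $N$ independent problems $\max_{\boldsymbol{f}_n\in\mathcal{F}_n}G(\boldsymbol{f}_n,\boldsymbol{s}_{1,n})$.

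\textbf{Step 4: per-element maximization (main obstacle).} For $b>0$, $G$ is a nondecreasing function of $\boldsymbol{f}_n^T\boldsymbol{s}_{1,n}$. The subproblem is thus to maximize the linear function $\boldsymbol{f}^T\boldsymbol{s}$ over the cap $\{\|\boldsymbol{f}\|=1,\ \boldsymbol{f}^T\boldsymbol{e}_z\ge\cos\theta_{\max}\}$. We show the maximizer is the spherical projection $\Pi_{\mathcal{F}_n}(\boldsymbol{s})$, i.e., the feasible unit vector with minimum geodesic angle to $\boldsymbol{s}$.
\begin{itemize}
\item If $\boldsymbol{s}\in\mathcal{F}_n$, then $\boldsymbol{f}=\boldsymbol{s}$ attains the upper bound $\boldsymbol{f}^T\boldsymbol{s}\le1$.
\item Otherwise, $\boldsymbol{f}^T\boldsymbol{s}=\cos\angle(\boldsymbol{f},\boldsymbol{s})$, and $\cos$ is decreasing on $[0,\pi]$. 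Maximizing $\boldsymbol{f}^T\boldsymbol{s}$ is therefore the same as minimizing the angle to $\boldsymbol{s}$. Because $\|\boldsymbol{f}-\boldsymbol{s}\|^2=2-2\boldsymbol{f}^T\boldsymbol{s}$, this also equals the Euclidean projection onto the cap. By the spherical triangle inequality, the minimizer lies on the boundary circle in the great-circle plane spanned by $\boldsymbol{e}_z$ and $\boldsymbol{s}$:
\[
\boldsymbol{f}^\star=\cos\theta_{\max}\,\boldsymbol{e}_z+\sin\theta_{\max}\,\frac{\boldsymbol{s}-(\boldsymbol{s}^T\boldsymbol{e}_z)\boldsymbol{e}_z}{\|\boldsymbol{s}-(\boldsymbol{s}^T\boldsymbol{e}_z)\boldsymbol{e}_z\|}.
\]
The only degenerate case is $\boldsymbol{s}=-\boldsymbol{e}_z$. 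It does not arise because users are in front of the array ($z_k>0$), so $\boldsymbol{s}_{1,n}$ has positive $z$-component.
\end{itemize}

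Two boundary remarks, both edge cases to flag rather than real obstructions:
\begin{itemize}
\item When $b=0$ the gain is constant on the half-space, and the projection is still a maximizer.
\item The $[\cdot]_+$ truncation is harmless. Since $\theta_{\max}\le\pi/2$ and $s_z>0$, the projection gives $\boldsymbol{f}^{\star T}\boldsymbol{s}>0$, so no other feasible point can do better.
\end{itemize}
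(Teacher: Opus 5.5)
Your proposal is correct and follows the same route as the paper's proof. With $\boldsymbol{R}_1=\boldsymbol{0}$, the objective depends on $\boldsymbol{F}$ only through $\|\boldsymbol{\mu}_1(\boldsymbol{F})\|^2=\sum_n|\mu_{1,n}(\boldsymbol{F})|^2$, which is separable and monotone in each $G(\boldsymbol{f}_n,\boldsymbol{s}_{1,n})$, so each $\boldsymbol{f}_n$ maximizes $\boldsymbol{f}_n^T\boldsymbol{s}_{1,n}$ over its cap. You make explicit what the paper leaves implicit: the common SINR $p_1\|\boldsymbol{\mu}_1\|^2/\sigma^2$ for both receivers (for wZF this relies on $\boldsymbol{Z}=\sigma^2\boldsymbol{I}_N$, which follows from your $\boldsymbol{C}_{\mathrm{e},1}=\boldsymbol{0}$, rather than on the absence of interference), and the boundary maximizer, which coincides with \eqref{eq:cone_projection}.
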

\begin{IEEEproof}
With $\boldsymbol{R}_1= \boldsymbol{0}$, the dependence on $\boldsymbol{F}$ is only through $\|\boldsymbol{\mu}_1(\boldsymbol{F})\|^2=\sum_n | \mu_{1,n}(\boldsymbol{F})|^2$, and $|\mu_{1,n}(\boldsymbol{F})|^2$ increases monotonically with $G(\boldsymbol{f}_n,\boldsymbol{s}_{1,n})$. Since constraints are separable across $n$, each $\boldsymbol{f}_n$ is optimized by maximizing $\boldsymbol{f}_n^T\boldsymbol{s}_{1,n}$ over $\mathcal{F}_n$, which is exactly \eqref{eq:single_user_solution}.
\end{IEEEproof}
\vspace{-3pt}
\begin{remark} 
Proposition \ref{thm:single_user} isolates the only regime in which the RA orientation design reduces to independent per-element gain maximization via boresight alignment, as rotation affects the objective only through the directional gain of each RA in the single-user pure-LoS case. This simple structure no longer holds in the general multiuser setting, where rotations jointly reshape the multipath channel statistics, multiuser interference, and CSI error. This motivates the receiver-specific large-timescale rate analysis for MRC and wZF in Section \ref{sec:rate_analysis}.
\end{remark}

\vspace{-9pt}
\section{Achievable Rate Analysis for MRC and wZF} \label{sec:rate_analysis}

In this section, we derive a closed-form UatF-based rate lower bound for MRC and a closed-form statistical surrogate for wZF. According to \eqref{eq:hk_distribution},   \eqref{eq:hat_h_distribution}, and \eqref{eq:tilde_h_distribution}, we introduce the second-order moment matrices for compactness
\begin{align}
  \!\!\boldsymbol{\Sigma}_{h,k}(\boldsymbol{F})
  &\!\triangleq\!
\mathbb{E}\big[\boldsymbol{h}_k(\boldsymbol{F})\boldsymbol{h}_k^H(\boldsymbol{F})\big] \!=\!  \boldsymbol{R}_k(\boldsymbol{F})
  \!+  \!\boldsymbol{\mu}_k(\boldsymbol{F})\boldsymbol{\mu}_k^H(\boldsymbol{F}),\!\!
  \label{eq:Sigma_h_def}
  \\
  \!\!\boldsymbol{\Sigma}_{\hat{h},k}(\boldsymbol{F})
  &\!\triangleq\!
  \mathbb{E}\big[
    \hat{\boldsymbol{h}}_k(\boldsymbol{F})
    \hat{\boldsymbol{h}}_k^H(\boldsymbol{F})
  \big]\!=\!  \boldsymbol{C}_{\!\hat{h}_k}\!(\boldsymbol{F})
  \!+\!  \boldsymbol{\mu}_k(\boldsymbol{F})\boldsymbol{\mu}_k^H(\boldsymbol{F}).\!\!\!
  \label{eq:Sigma_hhat_def}
\end{align}
Clearly, we have
$\boldsymbol{\Sigma}_{h,k}(\boldsymbol{F})
 = \boldsymbol{\Sigma}_{\hat{h},k}(\boldsymbol{F}) + \boldsymbol{C}_{\mathrm{e},k}(\boldsymbol{F})$.
Throughout this section, all expectations are taken over the joint
distribution of the actual channel, its LMMSE estimate, and the
estimation error, which all implicitly depend on
$\boldsymbol{F}$.

\vspace{-3pt}
\begin{remark} 
Unlike conventional arrays with fixed element patterns, $\boldsymbol{F}$ affects the RA main-lobe direction and reshapes the channel statistics through \eqref{eq:element_gain}. This dependence will propagate through the LMMSE estimator and the UatF bounds, creating nontrivial coupling between rotation, estimation error, and achievable rates under MRC and wZF reception.
\end{remark}

\vspace{-12pt}
\subsection{Closed-Form UatF Expression for MRC Receiver}

The MRC vector for user $k$ based on LMMSE estimates is
\begin{align}
  \boldsymbol{v}_k^{\mathrm{MRC}}(\hat{\boldsymbol{H}})
  \triangleq
  \hat{\boldsymbol{h}}_k(\boldsymbol{F}).
  \label{eq:MRC_combiner_def}
\end{align}
Substituting \eqref{eq:MRC_combiner_def} into \eqref{eq:sinr_uatf}, an unconditional UatF bound for the SINR of user $k$ under MRC is given by \eqref{eq:gamma_MRC_general}. 
\begin{figure*}
\begin{align}
&\qquad \quad \quad \quad \quad \bar{\gamma}_k^{\mathrm{MRC}}(\boldsymbol{F})
  \triangleq
  \frac{
    p_k 
    \big|
      \mathbb{E}\big[
        (\boldsymbol{v}_k^{\mathrm{MRC}})^H
        \boldsymbol{h}_k(\boldsymbol{F})
      \big]
    \big|^2
  }{
    \sum_{i=1}^K
      p_i 
      \mathbb{E}\big[
        \big|
          (\boldsymbol{v}_k^{\mathrm{MRC}})^H
          \boldsymbol{h}_i(\boldsymbol{F})
        \big|^2
      \big]
    - p_k
      \big|
        \mathbb{E}\big[
          (\boldsymbol{v}_k^{\mathrm{MRC}})^H
          \boldsymbol{h}_k(\boldsymbol{F})
        \big]
      \big|^2
    + \sigma^2
      \mathbb{E}\big[
        \|\boldsymbol{v}_k^{\mathrm{MRC}}\|^2
      \big]
  }.
  \label{eq:gamma_MRC_general}
\\
&\overline{\ \ \ \ \ \ \ \ \ \ \ \ \ \ \ \ \ \ \ \ \ \ \ \ \ \ \ \ \ \ \ \ \ \ \ \ \ \ \ \ \ \ \ \ \ \ \ \ \ \ \ \ \ \ \ \ \ \ \ \ \ \ \ \ \ \ \ \ \ \ \ \ \ \ \ \ \ \ \ \ \ \ \ \ \ \ \ \ \ \ \ \ \ \ \ \ \ \ \ \ \ \ \ \ \ \ \ \ \ \ \ \ \ \ \ \ \ \ \ \ \ \ \ \ \ \ \ \ \ \ \ \ \ \ \ \ \ \ \ \ \ \ \ \ } \nonumber
\end{align}
\vspace{-36pt}
\end{figure*}
For the useful signal term in the numerator of \eqref{eq:gamma_MRC_general}, since $\boldsymbol{h}_k(\boldsymbol{F})=\hat{\boldsymbol{h}}_k(\boldsymbol{F})+\tilde{\boldsymbol{h}}_k(\boldsymbol{F})$ with $\hat{\boldsymbol{h}}_k$ and $\tilde{\boldsymbol{h}}_k$ independent, we
have
\begin{subequations}
\begin{align}
  \mathbb{E}\big[
    (\boldsymbol{v}_k^{\mathrm{MRC}})^H
    \boldsymbol{h}_k(\boldsymbol{F})
  \big]
  & =
  \mathbb{E}\big[
    \|\hat{\boldsymbol{h}}_k(\boldsymbol{F})\|^2
  \big]
  \\
  &\!\!\!\!\!\!\!\!\!\!\!\!\!\!\!\!\!\!\!\!\!\!\!\!\!=
  \mathrm{tr}\big(\boldsymbol{C}_{\hat{h}_k}(\boldsymbol{F})\big)
  +
  \big\|
    \boldsymbol{\mu}_k(\boldsymbol{F})
  \big\|^2\triangleq\alpha_k(\boldsymbol{F}),
  \label{eq:MRC_signal_mean}
\end{align}
\end{subequations}
and then the numerator of \eqref{eq:gamma_MRC_general} becomes
\begin{align}
  p_k
  \big|
    \mathbb{E}\big[
      (\boldsymbol{v}_k^{\mathrm{MRC}})^H
      \boldsymbol{h}_k(\boldsymbol{F})
    \big]
  \big|^2
  =
  p_k \alpha_k^2(\boldsymbol{F}).
  \label{eq:MRC_S_term}
\end{align}
For the noise term, clearly we have
\begin{align}
  \mathbb{E}\big[
    \|\boldsymbol{v}_k^{\mathrm{MRC}}\|^2
  \big]
  =
  \mathbb{E}\big[
    \|\hat{\boldsymbol{h}}_k(\boldsymbol{F})\|^2
  \big]
  =
  \alpha_k(\boldsymbol{F}).
  \label{eq:MRC_combiner_norm} 
\end{align}
For interference terms when $i\neq k$, independence of channel estimates across users yields
\begin{align}
  \mathbb{E}\big[ \big|
      (\boldsymbol{v}_k^{\mathrm{MRC}})^H
      \boldsymbol{h}_i(\boldsymbol{F})
    \big|^2 \big]
  &=
  \mathbb{E}\big[
    \hat{\boldsymbol{h}}_k^H(\boldsymbol{F})
    \boldsymbol{h}_i(\boldsymbol{F})
    \boldsymbol{h}_i^H(\boldsymbol{F})
    \hat{\boldsymbol{h}}_k(\boldsymbol{F})
  \big]   \nonumber
  \\
  &=
  \mathrm{tr}\big(
    \boldsymbol{\Sigma}_{h,i}(\boldsymbol{F}) 
    \boldsymbol{\Sigma}_{\hat{h},k}(\boldsymbol{F})
  \big) \nonumber
    \\
  &\triangleq \varTheta_{i,k}(\boldsymbol{F}).
  \label{eq:MRC_interf_i_neq_k}
\end{align}
For $i=k$, due to the composition \eqref{eq:error_def}, 
$\mathbb{E}[\tilde{\boldsymbol{h}}_k(\boldsymbol{F})]=\boldsymbol{0}$, and the
independence between $\hat{\boldsymbol{h}}_k(\boldsymbol{F})$ and $\tilde{\boldsymbol{h}}_k(\boldsymbol{F})$, we obtain
\begin{subequations}
\begin{align}
&\mathbb{E}\big[
    \big|
      (\boldsymbol{v}_k^{\mathrm{MRC}})^H
      \boldsymbol{h}_k(\boldsymbol{F})
    \big|^2
  \big]  \nonumber
 \\
 & =  \mathbb{E}\big[
    \|\hat{\boldsymbol{h}}_k(\boldsymbol{F})\|^4
  \big] +
  \mathbb{E}\big[
    \big|
      \hat{\boldsymbol{h}}_k^H(\boldsymbol{F})
      \tilde{\boldsymbol{h}}_k(\boldsymbol{F})
    \big|^2
  \big], 
  \\
 & =  \alpha_k^2(\boldsymbol{F})  +  \mathrm{tr}\big( \boldsymbol{C}_{\hat{h}_k}^2(\boldsymbol{F})
  \big) + 2\boldsymbol{\mu}_k^H(\boldsymbol{F})
    \boldsymbol{C}_{\hat{h}_k}(\boldsymbol{F})
    \boldsymbol{\mu}_k(\boldsymbol{F}) \nonumber
  \\
 & \quad   
    +\mathrm{tr}\big( \boldsymbol{C}_{\mathrm{e},k}(\boldsymbol{F}) 
    \boldsymbol{\Sigma}_{\hat{h},k}(\boldsymbol{F}) \big)
  \label{eq:MRC_selfinterf}
\end{align}
\end{subequations}
where we use fourth-moment identities for complex Gaussian vectors. 
Collecting the terms \eqref{eq:MRC_combiner_norm},  \eqref{eq:MRC_interf_i_neq_k}, and \eqref{eq:MRC_selfinterf}, the total interference-plus-noise term in the denominator of \eqref{eq:gamma_MRC_general} is
\begin{align}
  &I_k^{\mathrm{MRC}}(\boldsymbol{F}) =p_k\varPhi_k(\boldsymbol{F})+  \sum_{i\neq k}   p_i \varTheta_{i,k}(\boldsymbol{F})
  + \sigma^2 \alpha_k(\boldsymbol{F}),
  \label{eq:MRC_I_term}
\end{align}
where we denote the self-interference terms by
\begin{align}
\varPhi_k(\boldsymbol{F})\triangleq &\mathbb{E}\big[
    \big|
      (\boldsymbol{v}_k^{\mathrm{MRC}})^H
      \boldsymbol{h}_k(\boldsymbol{F})
    \big|^2
  \big] -  \big|
    \mathbb{E}\big[
      (\boldsymbol{v}_k^{\mathrm{MRC}})^H
      \boldsymbol{h}_k(\boldsymbol{F})
    \big]
  \big|^2 \nonumber
  \\
  = &  
  \mathrm{tr}\big( \boldsymbol{C}_{\hat{h}_k}^2(\boldsymbol{F})
  \big) + 2\boldsymbol{\mu}_k^H(\boldsymbol{F})
    \boldsymbol{C}_{\hat{h}_k}(\boldsymbol{F})
    \boldsymbol{\mu}_k(\boldsymbol{F})  
    \nonumber
  \\
   &   +\mathrm{tr}\big( \boldsymbol{C}_{\mathrm{e},k}(\boldsymbol{F}) 
    \boldsymbol{\Sigma}_{\hat{h},k}(\boldsymbol{F}) \big),
    \label{eq:MRC_selfinterf_neat}
\end{align}
Substituting \eqref{eq:MRC_S_term} and \eqref{eq:MRC_I_term} into
\eqref{eq:gamma_MRC_general} yields the closed-form UatF SINR for MRC
\begin{align}
  \bar{\gamma}_k^{\mathrm{MRC}}(\boldsymbol{F})
  &=
  \frac{p_k\alpha_k^2(\boldsymbol{F})}{
    I_k^{\mathrm{MRC}}(\boldsymbol{F}) }.
  \label{eq:gamma_MRC_closedform}
\end{align}
The corresponding large-timescale rate surrogate is
\begin{align}
  \bar{R}_k^{\mathrm{MRC}}(\boldsymbol{F})
  \triangleq
  \eta 
  \log_2\!\big(
    1 + \bar{\gamma}_k^{\mathrm{MRC}}(\boldsymbol{F})
  \big),
  \label{eq:rate_MRC_closedform}
\end{align}
and the sum rate is $\bar{R}_{\text{sum}}^{\mathrm{MRC}}(\boldsymbol{F})
  \triangleq   \sum_{k=1}^K \bar{R}_k^{\mathrm{MRC}}(\boldsymbol{F})$.

\begin{remark}
\label{prop:mrc_structure}
It is observed from \eqref{eq:MRC_I_term} and \eqref{eq:gamma_MRC_closedform} that the rotation matrix $\boldsymbol{F}$ affects the MRC rate through four coupled mechanisms:
(i) improving the coherent signal aggregation or the desired estimated-channel energy via $\alpha_k(\boldsymbol{F})$,
(ii) reducing the estimation-error-induced self-interference or signal-leakage via $\varPhi_k(\boldsymbol{F})$,
(iii) reducing multi-user interference between the combining directions via $\varTheta_{i,k}(\boldsymbol{F})$, and
(iv) reducing post-combining noise amplification via $\sigma^2\alpha_k(\boldsymbol{F})$. 
Hence, MRC-oriented rotation is driven neither by NMSE minimization alone as depicted in Remark \ref{rem:NMSE_analysis} nor by simply desired signal power maximization alone in Proposition \ref{thm:single_user}.
\end{remark}

\vspace{-12pt}
\subsection{Closed-Form Statistical Surrogate for wZF Receiver}
\label{subsec:zf_uatf}

For MRC, the expectation in the UatF expression can be evaluated in closed form. For wZF, however, the conditional UatF SINR depends on the non-isotropic estimation errors caused by orientation-dependent gains, whose outer expectation is difficult to evaluate exactly. We therefore construct a closed-form large-timescale surrogate in the following. Assume 
$K\leq N$ and $\hat{\boldsymbol{H}}(\boldsymbol{F})$ has full column rank. Consider ZF-type combiners satisfying
\begin{align}
(\boldsymbol{V}^{\mathrm{ZF}})^H\hat{\boldsymbol{H}}(\boldsymbol{F})=\boldsymbol{I}_K.
\label{eq:wZF_property}
\end{align}
Using $\boldsymbol{h}_i(\boldsymbol{F})\!=\!\hat{\boldsymbol{h}}_i(\boldsymbol{F})+\tilde{\boldsymbol{h}}_i(\boldsymbol{F})$ and the independence between $\tilde{\boldsymbol{h}}_i(\boldsymbol{F})$ and $\hat{\boldsymbol{H}}(\boldsymbol{F})$, under the wZF property \eqref{eq:wZF_property}, we have
\begin{subequations}
\begin{align}
& \big(\boldsymbol{v}_k^{\mathrm{wZF}}\big)^H\boldsymbol{h}_k(\boldsymbol{F})
 =  1 + \big(\boldsymbol{v}_k^{\mathrm{wZF}}\big)^H\tilde{\boldsymbol{h}}_k(\boldsymbol{F}),  
\\ 
&\big(\boldsymbol{v}_k^{\mathrm{wZF}}\big)^H\boldsymbol{h}_i(\boldsymbol{F})
  = \big(\boldsymbol{v}_k^{\mathrm{wZF}}\big)^H\tilde{\boldsymbol{h}}_i(\boldsymbol{F}),
\\
&  \mathbb{E} \left[
    \big(\boldsymbol{v}_k^{\mathrm{wZF}}\big)^H\boldsymbol{h}_k(\boldsymbol{F})
    ~\big|~
    \hat{\boldsymbol{H}}(\boldsymbol{F})
  \right]=  1,
\\
&  \mathrm{Var}\big(
  (\boldsymbol{v}_k^{\mathrm{wZF}})^H\boldsymbol{h}_k(\boldsymbol{F})
    \big|  \hat{\boldsymbol{H}}(\boldsymbol{F})
  \big)
  =
  \big(\boldsymbol{v}_k^{\mathrm{wZF}}\big)^H\boldsymbol{C}_{\mathrm{e},k}(\boldsymbol{F})\boldsymbol{v}_k^{\mathrm{wZF}},
\\
&\mathbb{E}\big[
    \big| (\boldsymbol{v}_k^{\mathrm{wZF}})^H\boldsymbol{h}_i(\boldsymbol{F})
    \big|^2
    \big|  \hat{\boldsymbol{H}}(\boldsymbol{F})
  \big]=
  \big(\boldsymbol{v}_k^{\mathrm{wZF}}\big)^H\boldsymbol{C}_{\mathrm{e},i}(\boldsymbol{F})\boldsymbol{v}_k^{\mathrm{wZF}}.
\end{align}
\end{subequations}
Hence, the UatF SINR \eqref{eq:sinr_uatf} is transformed to
\begin{align}
\!\!\bar{\gamma}_k^{\mathrm{wZF}}\!\big(\boldsymbol{F},\hat{\boldsymbol{H}}\big)
\!=\!
\frac{p_k}{
\sum_{i=1}^K p_i (\boldsymbol{v}_k^{\mathrm{wZF}})^H\boldsymbol{C}_{e,i}(\boldsymbol{F})\boldsymbol{v}_k^{\mathrm{wZF}}
\!+\!\sigma^2\|\boldsymbol{v}_k^{\mathrm{wZF}}\|^2}.\!\!\!
\label{eq:gamma_wZF_uatf_cond}
\end{align}
Define the colored effective noise as
\begin{align}
\boldsymbol{Z}(\boldsymbol{F})
\triangleq
\sigma^2\boldsymbol{I}_N+\sum_{i=1}^K p_i\,\boldsymbol{C}_{e,i}(\boldsymbol{F}).
\label{eq:Z_def}
\end{align}
\begin{proposition}
\label{prop:wzf_opt}
Assume $K\leq N$ and $\hat{\boldsymbol{H}}(\boldsymbol{F})$ has full column rank, and thus $\hat{\boldsymbol{H}}^H(\boldsymbol{F})\boldsymbol{Z}^{-1}(\boldsymbol{F})\hat{\boldsymbol{H}}(\boldsymbol{F})$ is invertible; otherwise, a regularized version should be adopted. Then, among all ZF-type combiners satisfying \eqref{eq:wZF_property}, under colored effective noise, the one maximizing \eqref{eq:gamma_wZF_uatf_cond} for each user is the wZF \cite{ref29}
\begin{align}
\boldsymbol{V}^{\mathrm{wZF}}(\hat{\boldsymbol{H}})
\!=\!
\boldsymbol{Z}^{-1}(\boldsymbol{F})\hat{\boldsymbol{H}}(\boldsymbol{F})
\big(\hat{\boldsymbol{H}}^H(\boldsymbol{F})\boldsymbol{Z}^{-1}(\boldsymbol{F})\hat{\boldsymbol{H}}(\boldsymbol{F})\big)^{-1}, \!\!\!
\label{eq:wZF_def}
\end{align}
and the resulting conditional SINR is
\begin{align}
\bar{\gamma}_k^{\mathrm{wZF}}\big(\boldsymbol{F},\hat{\boldsymbol{H}}\big)
=
\frac{p_k}{
\Big[\big(\hat{\boldsymbol{H}}^H(\boldsymbol{F})\boldsymbol{Z}^{-1}(\boldsymbol{F})\hat{\boldsymbol{H}}(\boldsymbol{F})\big)^{-1}\Big]_{k,k}
}.
\label{eq:wZF_uatf_cond}
\end{align}
\end{proposition}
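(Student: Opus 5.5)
Since every ZF-type combiner fixes the numerator of \eqref{eq:gamma_wZF_uatf_cond} to $p_k$, maximizing the conditional SINR of user $k$ is the same as minimizing its denominator. With \eqref{eq:Z_def}, that denominator is the quadratic form $\boldsymbol{v}_k^H\boldsymbol{Z}(\boldsymbol{F})\boldsymbol{v}_k$. The problem therefore decouples across users into the linearly constrained quadratic programs
\begin{align*}
\min_{\boldsymbol{v}_k}\ \boldsymbol{v}_k^H\boldsymbol{Z}\boldsymbol{v}_k \quad \text{s.t.}\quad \hat{\boldsymbol{H}}^H\boldsymbol{v}_k=\boldsymbol{e}_k,
\end{align*}
where $\boldsymbol{e}_k$ is the $k$-th canonical vector. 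The constraint is just the $k$-th column of \eqref{eq:wZF_property}. I will first record that $\boldsymbol{Z}\succeq\sigma^2\boldsymbol{I}_N\succ\boldsymbol{0}$, since each $\boldsymbol{C}_{e,i}\succeq\boldsymbol{0}$ by \eqref{eq:error_covariance}. Hence $\boldsymbol{Z}^{-1}$ exists, and $\boldsymbol{A}\triangleq\hat{\boldsymbol{H}}^H\boldsymbol{Z}^{-1}\hat{\boldsymbol{H}}$ is positive definite whenever $\hat{\boldsymbol{H}}$ has full column rank. This justifies the invertibility claimed in the statement.

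Next I solve each subproblem in whitened coordinates. Set $\boldsymbol{u}=\boldsymbol{Z}^{1/2}\boldsymbol{v}_k$ and $\boldsymbol{G}=\boldsymbol{Z}^{-1/2}\hat{\boldsymbol{H}}$. The problem becomes minimizing $\|\boldsymbol{u}\|^2$ subject to $\boldsymbol{G}^H\boldsymbol{u}=\boldsymbol{e}_k$, which is a minimum-norm solution of an underdetermined linear system. Its minimizer is $\boldsymbol{u}^\star=\boldsymbol{G}(\boldsymbol{G}^H\boldsymbol{G})^{-1}\boldsymbol{e}_k$. To avoid quoting the minimum-norm result, I will argue directly. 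Any feasible $\boldsymbol{u}$ can be written as $\boldsymbol{u}^\star+\boldsymbol{d}$ with $\boldsymbol{G}^H\boldsymbol{d}=\boldsymbol{0}$. Then $(\boldsymbol{u}^\star)^H\boldsymbol{d}=\boldsymbol{e}_k^H(\boldsymbol{G}^H\boldsymbol{G})^{-1}\boldsymbol{G}^H\boldsymbol{d}=0$, so $\|\boldsymbol{u}\|^2=\|\boldsymbol{u}^\star\|^2+\|\boldsymbol{d}\|^2$, with equality only at $\boldsymbol{d}=\boldsymbol{0}$. This gives optimality and uniqueness. Mapping back, $\boldsymbol{v}_k^\star=\boldsymbol{Z}^{-1}\hat{\boldsymbol{H}}\boldsymbol{A}^{-1}\boldsymbol{e}_k$.

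Stacking the $\boldsymbol{v}_k^\star$ over $k$ gives exactly \eqref{eq:wZF_def}. I will check that $(\boldsymbol{V}^{\mathrm{wZF}})^H\hat{\boldsymbol{H}}=\boldsymbol{A}^{-1}\hat{\boldsymbol{H}}^H\boldsymbol{Z}^{-1}\hat{\boldsymbol{H}}=\boldsymbol{I}_K$, using that $\boldsymbol{A}$ is Hermitian. The optimal denominator is then
\begin{align*}
(\boldsymbol{v}_k^\star)^H\boldsymbol{Z}\boldsymbol{v}_k^\star=\boldsymbol{e}_k^H\boldsymbol{A}^{-1}\hat{\boldsymbol{H}}^H\boldsymbol{Z}^{-1}\boldsymbol{Z}\boldsymbol{Z}^{-1}\hat{\boldsymbol{H}}\boldsymbol{A}^{-1}\boldsymbol{e}_k=[\boldsymbol{A}^{-1}]_{k,k},
\end{align*}
and substituting it yields \eqref{eq:wZF_uatf_cond}.

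The main obstacle is the first step, and it is conceptual rather than computational. I must confirm that the reduction to \eqref{eq:gamma_wZF_uatf_cond} is valid for an arbitrary ZF-type combiner. In particular, $\boldsymbol{C}_{e,i}$ must be the conditional error covariance given the whole $\hat{\boldsymbol{H}}$, and not just given $\hat{\boldsymbol{h}}_i$. This holds because the pilots are orthogonal, so \eqref{eq:ypk} gives independent observations across users. Each $\tilde{\boldsymbol{h}}_i$ is therefore independent of all of $\hat{\boldsymbol{H}}$, and the per-user decoupling is legitimate. For the rank-deficient case, I would only remark that replacing $\boldsymbol{A}$ by $\boldsymbol{A}+\epsilon\boldsymbol{I}_K$ yields the regularized variant mentioned in the statement.
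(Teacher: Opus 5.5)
Your proof is correct and follows the paper's route: you reduce each user's SINR maximization to minimizing $\boldsymbol{v}_k^H\boldsymbol{Z}\boldsymbol{v}_k$ subject to $\hat{\boldsymbol{H}}^H\boldsymbol{v}_k=\boldsymbol{e}_k$, then solve this quadratic program in closed form. Where the paper simply invokes KKT conditions, you solve it by whitening plus a Pythagorean minimum-norm argument, which also gives uniqueness. You additionally verify $\boldsymbol{Z}\succ\boldsymbol{0}$ and that $\tilde{\boldsymbol{h}}_i$ is independent of all of $\hat{\boldsymbol{H}}$; the paper takes both for granted.
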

\begin{IEEEproof}
For each $k$, maximizing \eqref{eq:gamma_wZF_uatf_cond} is equivalent to minimizing $\boldsymbol{v}_k^H\boldsymbol{Z}\boldsymbol{v}_k$ subject to $\boldsymbol{v}_k^H\hat{\boldsymbol{H}}=\boldsymbol{e}_k^T$. This strictly convex quadratic program yields \eqref{eq:wZF_def} by KKT conditions, and substituting into \eqref{eq:gamma_wZF_uatf_cond} gives \eqref{eq:wZF_uatf_cond}.
\end{IEEEproof}
To enable large-timescale optimization using statistical CSI only, we adopt a first-order statistical surrogate as below, by replacing the random whitened Gram matrix with \cite{ref30}
\begin{align}
\hat{\boldsymbol{H}}^H(\boldsymbol{F})\boldsymbol{Z}^{-1}(\boldsymbol{F})\hat{\boldsymbol{H}}(\boldsymbol{F})
&\approx
\mathbb{E}\Big[
\hat{\boldsymbol{H}}^H(\boldsymbol{F})\boldsymbol{Z}^{-1}(\boldsymbol{F})\hat{\boldsymbol{H}}(\boldsymbol{F})
\Big] \nonumber
\\
&\triangleq\bar{\boldsymbol{S}}(\boldsymbol{F}).
\label{eq:Sbar_def}
\end{align}
Let $\boldsymbol{M}(\boldsymbol{F})=[\boldsymbol{\mu}_1(\boldsymbol{F}),\dots,\boldsymbol{\mu}_K(\boldsymbol{F})]$ collect the LoS components. Since $\hat{\boldsymbol{h}}_k\sim\mathcal{CN}(\boldsymbol{\mu}_k,\boldsymbol{C}_{\hat{h}_k})$, we obtain
\begin{align}
\bar{\boldsymbol{S}}(\boldsymbol{F})
\!=&
\mathrm{diag}\big(
\mathrm{tr}\big(\boldsymbol{Z}^{-1}\!(\boldsymbol{F})\boldsymbol{C}_{\hat{h}_1}\!(\boldsymbol{F})\big),
\dots,
\mathrm{tr}\big(\boldsymbol{Z}^{-1}\!(\boldsymbol{F})\boldsymbol{C}_{\hat{h}_K}\!(\boldsymbol{F})\big)
\big) \nonumber
\\
&+\boldsymbol{M}^H(\boldsymbol{F})\boldsymbol{Z}^{-1}(\boldsymbol{F})\boldsymbol{M}(\boldsymbol{F}).
\label{eq:Sbar_closed}
\end{align}
Then the surrogate SINR and rate are
\begin{align}
&\bar{\gamma}_k^{\mathrm{wZF}}(\boldsymbol{F})
\triangleq
\frac{p_k}{\big[\bar{\boldsymbol{S}}^{-1}(\boldsymbol{F})\big]_{k,k}},
\label{eq:wZF_SINR_stat}
\\
&\bar{R}_k^{\mathrm{wZF}}(\boldsymbol{F})
\triangleq
\eta\log_2\!\big(1+\bar{\gamma}_k^{\mathrm{wZF}}(\boldsymbol{F})\big).
\label{eq:wZF_rate_stat}
\end{align}
and the sum rate is $\bar{R}_{\text{sum}}^{\mathrm{wZF}}(\boldsymbol{F})
  \triangleq   \sum_{k=1}^K \bar{R}_k^{\mathrm{wZF}}(\boldsymbol{F})$.
\begin{remark}
\label{rem:isotropic}
If we consider the isotropic estimation error across antennas for all $i$ as  
\begin{align}
\boldsymbol{C}_{\mathrm{e},i}(\boldsymbol{F}) \approx \sigma_{\mathrm{e},i}^2(\boldsymbol{F})\,\boldsymbol{I}_N,
\quad \forall i\in\mathcal{K},
\label{eq:Ce_isotropic_assumption}
\end{align}
\eqref{eq:gamma_wZF_uatf_cond} reduces to a classical wZF noise-enhancement form
\begin{align}
\bar{\gamma}_k^{\mathrm{wZF}}\big(\boldsymbol{F},\hat{\boldsymbol{H}}\big)
=
\frac{
p_k
}{
\big(
\sigma^2 + \sum_{i=1}^K p_i\sigma_{\mathrm{e},i}^2(\boldsymbol{F})
\big)
\big[\boldsymbol{G}^{-1}(\boldsymbol{F})\big]_{k,k}
},
\label{eq:gamma_wZF_uatf_isotropic}
\end{align}
where $\boldsymbol{G}(\boldsymbol{F})\triangleq \hat{\boldsymbol{H}}(\boldsymbol{F})^H\hat{\boldsymbol{H}}(\boldsymbol{F})$. In this case, one may further invoke massive-MIMO moment-matched approximation
\begin{align}
\!&\mathbb{E}\big[\big[\boldsymbol{G}^{-1}(\boldsymbol{F})\big]_{k,k}\big]
\approx
\frac{1}{(N-K)N\hat{\beta}_k(\boldsymbol{F})}
\Big[\hat{\boldsymbol{\Sigma}}^{-1}(\boldsymbol{F})\Big]_{k,k},
\label{eq:DE_Ginv_kk}
\\
\!&\hat{\boldsymbol{\Sigma}}(\boldsymbol{F})
\!\triangleq\!
\hat{\boldsymbol{\Lambda}}_1(\boldsymbol{F})
\!+\!
\frac{1}{N\!-\!K}
\hat{\boldsymbol{\Lambda}}_2^H(\boldsymbol{F})
\bar{\boldsymbol{H}}^H(\boldsymbol{F})
\bar{\boldsymbol{H}}(\boldsymbol{F})
\hat{\boldsymbol{\Lambda}}_2(\boldsymbol{F}),\!\!
\label{eq:Sigma_hat_def}
\end{align}
with 
leading to the approximation
\begin{align}
  \bar{\gamma}_k^{\mathrm{wZF}}(\boldsymbol{F})
  \triangleq
  \frac{
    p_k (N-K)N\hat{\beta}_k(\boldsymbol{F})
  }{
    \big(
      \sigma^2 + \sum_{i=1}^K p_i\sigma_{\mathrm{e},i}^2(\boldsymbol{F})
    \big)
    \big[\hat{\boldsymbol{\Sigma}}^{-1}(\boldsymbol{F})\big]_{k,k}
  },
  \label{eq:gamma_wZF_closedform_LMMSE}
\end{align}
with $\hat{\beta}_k(\boldsymbol{F})
\triangleq \frac{1}{N}\mathbb{E}\big[\|\hat{\boldsymbol{h}}_k(\boldsymbol{F})\|^2\big]$, $\bar{\boldsymbol{h}}_k(\boldsymbol{F})
\triangleq
\sqrt{N}
\frac{\boldsymbol{\mu}_k(\boldsymbol{F})}{\|\boldsymbol{\mu}_k(\boldsymbol{F})\|}$, $\bar{\boldsymbol{H}}(\boldsymbol{F})
\triangleq
\big[
\bar{\boldsymbol{h}}_1(\boldsymbol{F}), \dots, \bar{\boldsymbol{h}}_K(\boldsymbol{F})
\big]$, $\hat{\boldsymbol{\Lambda}}_1(\boldsymbol{F})
\triangleq
\big(\hat{\boldsymbol{\Omega}}(\boldsymbol{F})+\boldsymbol{I}_K\big)^{-1}$, $\hat{\boldsymbol{\Lambda}}_2(\boldsymbol{F})
\triangleq
\big(\hat{\boldsymbol{\Omega}}(\boldsymbol{F})\hat{\boldsymbol{\Lambda}}_1(\boldsymbol{F})\big)^{1/2}$, $\hat{\boldsymbol{\Omega}}(\boldsymbol{F})=\mathrm{diag}(\hat{\kappa}_1,\ldots,\hat{\kappa}_K)$, and $\hat{\kappa}_k(\boldsymbol{F})
\triangleq
\frac{\|\boldsymbol{\mu}_k(\boldsymbol{F})\|^2}{\mathrm{tr}(\boldsymbol{C}_{\hat{h}_k}(\boldsymbol{F}))}$. In RA systems, however, \eqref{eq:Ce_isotropic_assumption} is generally violated because $\boldsymbol{F}$ induces antenna-dependent gains; hence we use \eqref{eq:Sbar_closed}--\eqref{eq:wZF_rate_stat} as the main ZF-type surrogate.
\end{remark}

\begin{corollary} \label{thm:wZF_schur}
Let $\bar{\boldsymbol{S}}(\boldsymbol{F})$
in \eqref{eq:Sbar_closed} be partitioned with respect to user $k$ as
\begin{align}
\bar{\boldsymbol{S}}(\boldsymbol{F})
=
\begin{bmatrix}
\bar{s}_{k,k}(\boldsymbol{F}) &
\bar{\boldsymbol{s}}_{k,-k}^H(\boldsymbol{F})
\\
\bar{\boldsymbol{s}}_{k,-k}(\boldsymbol{F}) &
\bar{\boldsymbol{S}}_{-k,-k}(\boldsymbol{F})
\end{bmatrix},
\label{eq:Sbar_block}
\end{align}
where
\begin{align}
\bar{s}_{k,k}(\boldsymbol{F})
&=
\boldsymbol{\mu}_k^H(\boldsymbol{F})
\boldsymbol{Z}^{-1}(\boldsymbol{F})
\boldsymbol{\mu}_k(\boldsymbol{F})
+
\tr\big(
\boldsymbol{Z}^{-1}(\boldsymbol{F})
\boldsymbol{C}_{\hat{h}_k}(\boldsymbol{F})
\big)
\nonumber\\
&=
\big\|
\boldsymbol{Z}^{-1/2}(\boldsymbol{F})
\boldsymbol{\mu}_k(\boldsymbol{F})
\big\|^2
\nonumber \\
&\quad +
\tr\big(
\boldsymbol{Z}^{-1/2}(\boldsymbol{F})
\boldsymbol{C}_{\hat{h}_k}(\boldsymbol{F})
\boldsymbol{Z}^{-1/2}(\boldsymbol{F})
\big),
\label{eq:wZF_self_gain}
\\
\bar{\boldsymbol{s}}_{k,-k}(\boldsymbol{F})
&=
\Big[
\boldsymbol{\mu}_i^H(\boldsymbol{F})
\boldsymbol{Z}^{-1}(\boldsymbol{F})
\boldsymbol{\mu}_k(\boldsymbol{F})
\Big]_{i\neq k}.
\label{eq:wZF_cross_corr}
\end{align}
Then the exact representation of the SINR surrogate with wZF combiner in \eqref{eq:wZF_SINR_stat} can be written as
\begin{align}
\bar{\gamma}_k^{\mathrm{wZF}}(\boldsymbol{F})=
p_k
\big(
\bar{s}_{k,k}(\boldsymbol{F})
-
\bar{\boldsymbol{s}}_{k,-k}^H(\boldsymbol{F})
\bar{\boldsymbol{S}}_{-k,-k}^{-1}(\boldsymbol{F})
\bar{\boldsymbol{s}}_{k,-k}(\boldsymbol{F})
\big).
\label{eq:wZF_schur_gamma}
\end{align}
Therefore, RA rotation $\boldsymbol{F}$ affects wZF rate through
two explicit mechanisms:
(i) it increases the error-aware useful signal strength, 
$\bar{s}_{k,k}(\boldsymbol{F})$, 
while (ii) reducing the error-aware inter-user coupling, or the multiuser interference, $\bar{\boldsymbol{s}}_{k,-k}^H
\bar{\boldsymbol{S}}_{-k,-k}^{-1}
\bar{\boldsymbol{s}}_{k,-k}$. Since $\boldsymbol{Z}(\boldsymbol{F})$ itself depends on $\{\boldsymbol{C}_{e,i}(\boldsymbol{F})\}$,
estimation errors affect both the useful signal term and the interference term simultaneously.
\end{corollary}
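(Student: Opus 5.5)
The plan is to read off the block entries of $\bar{\boldsymbol{S}}(\boldsymbol{F})$ directly from \eqref{eq:Sbar_closed}. Then I will invoke the block-inverse (Schur complement) formula to evaluate $[\bar{\boldsymbol{S}}^{-1}(\boldsymbol{F})]_{k,k}$, and finally substitute the result into \eqref{eq:wZF_SINR_stat}.

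\textbf{Step 1: diagonal and off-diagonal entries.} Without loss of generality, permute users so that $k$ is listed first. A symmetric permutation $\boldsymbol{\Pi}\bar{\boldsymbol{S}}\boldsymbol{\Pi}^T$ leaves the $(k,k)$ entry of the inverse unchanged, since $(\boldsymbol{\Pi}\bar{\boldsymbol{S}}\boldsymbol{\Pi}^T)^{-1}=\boldsymbol{\Pi}\bar{\boldsymbol{S}}^{-1}\boldsymbol{\Pi}^T$, so the partition \eqref{eq:Sbar_block} is legitimate.
\begin{itemize}
\item The $(k,k)$ entry of \eqref{eq:Sbar_closed} is $\mathrm{tr}(\boldsymbol{Z}^{-1}\boldsymbol{C}_{\hat{h}_k})+\boldsymbol{\mu}_k^H\boldsymbol{Z}^{-1}\boldsymbol{\mu}_k$. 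This gives the first line of \eqref{eq:wZF_self_gain}.
\item $\boldsymbol{Z}(\boldsymbol{F})\succ\boldsymbol{0}$ because $\sigma^2>0$ and each $\boldsymbol{C}_{e,i}\succeq\boldsymbol{0}$. Hence its Hermitian square root $\boldsymbol{Z}^{-1/2}$ exists, and the cyclic property of the trace yields the second line.
\item The diagonal trace matrix contributes nothing off the diagonal. The $(i,k)$ entries with $i\neq k$ therefore come only from $\boldsymbol{M}^H\boldsymbol{Z}^{-1}\boldsymbol{M}$, namely $\boldsymbol{\mu}_i^H\boldsymbol{Z}^{-1}\boldsymbol{\mu}_k$. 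This gives \eqref{eq:wZF_cross_corr}, and Hermitian symmetry gives the top-right block $\bar{\boldsymbol{s}}_{k,-k}^H$.
\end{itemize}

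\textbf{Step 2: invertibility.} Both pieces of $\bar{\boldsymbol{S}}$ are positive semidefinite: the diagonal traces $\mathrm{tr}(\boldsymbol{Z}^{-1}\boldsymbol{C}_{\hat{h}_i})\ge 0$, and $\boldsymbol{M}^H\boldsymbol{Z}^{-1}\boldsymbol{M}$ is a Gram matrix. So $\bar{\boldsymbol{S}}\succeq\boldsymbol{0}$, and the surrogate \eqref{eq:wZF_SINR_stat} already presupposes $\bar{\boldsymbol{S}}$ is invertible, hence $\bar{\boldsymbol{S}}\succ\boldsymbol{0}$. Its principal submatrix $\bar{\boldsymbol{S}}_{-k,-k}$ is then also positive definite and invertible. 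The Schur complement $c_k\triangleq\bar{s}_{k,k}-\bar{\boldsymbol{s}}_{k,-k}^H\bar{\boldsymbol{S}}_{-k,-k}^{-1}\bar{\boldsymbol{s}}_{k,-k}$ is strictly positive, since it equals the determinant ratio $\det\bar{\boldsymbol{S}}/\det\bar{\boldsymbol{S}}_{-k,-k}$.

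\textbf{Step 3: block inversion and substitution.} The standard $2\times2$ block-inverse formula gives $[\bar{\boldsymbol{S}}^{-1}]_{k,k}=c_k^{-1}$. Substituting into \eqref{eq:wZF_SINR_stat} yields $\bar{\gamma}_k^{\mathrm{wZF}}=p_k c_k$, which is exactly \eqref{eq:wZF_schur_gamma}.

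\textbf{Interpretation.} The mechanism statements (i)–(ii) read off the two terms of $c_k$. The subtracted term is nonnegative because $\bar{\boldsymbol{S}}_{-k,-k}^{-1}\succ\boldsymbol{0}$. Both terms involve $\boldsymbol{Z}^{-1}$, which depends on $\{\boldsymbol{C}_{e,i}(\boldsymbol{F})\}$, so estimation error enters both.

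\textbf{Main obstacle.} No step is technically hard. The only delicate point is justifying invertibility of $\bar{\boldsymbol{S}}_{-k,-k}$ and positivity of the Schur complement. I will handle this with the positive semidefinite structure above, and, if needed, a regularized $\bar{\boldsymbol{S}}$ as in Proposition~\ref{prop:wzf_opt}.
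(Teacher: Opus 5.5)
Your proposal is correct and follows the paper's route. The paper's proof simply invokes the Schur-complement identity $[\bar{\boldsymbol{S}}^{-1}]_{k,k}^{-1}=\bar{s}_{k,k}-\bar{\boldsymbol{s}}_{k,-k}^H\bar{\boldsymbol{S}}_{-k,-k}^{-1}\bar{\boldsymbol{s}}_{k,-k}$ and substitutes it into \eqref{eq:wZF_SINR_stat}; your extra details fill in what the paper leaves implicit, namely reading off the block entries from \eqref{eq:Sbar_closed}, the permutation argument, and using $\bar{\boldsymbol{S}}\succ\boldsymbol{0}$ to justify invertibility of $\bar{\boldsymbol{S}}_{-k,-k}$ and positivity of the Schur complement.
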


\begin{IEEEproof}
\eqref{eq:wZF_schur_gamma} follows from the Schur-complement identity
\begin{align}
[\bar{\boldsymbol{S}}^{-1}]_{k,k}^{-1}
=
\bar{s}_{k,k}-\bar{\boldsymbol{s}}_{k,-k}^H\bar{\boldsymbol{S}}_{-k,-k}^{-1}\bar{\boldsymbol{s}}_{k,-k}.
\end{align}
Since
$\bar{\gamma}_k^{\mathrm{wZF}}(\boldsymbol{F})
=
\frac{p_k}{[\bar{\boldsymbol{S}}^{-1}(\boldsymbol{F})]_{k,k}}$ in \eqref{eq:wZF_SINR_stat}, we arrive at \eqref{eq:wZF_schur_gamma}.
\end{IEEEproof}

\section{Rotation Optimization}
\label{sec:rotation_optimization}

In this section, we develop efficient algorithms for solving the RA rotation design problems (P2-MRC) and (P2-wZF) in \eqref{prob:P2_MRC_wZF} based on large-timescale statistical CSI. Note that Proposition~\ref{prop:wzf_opt} has established wZF as the conditionally optimal ZF-type combiner under the colored effective-noise covariance \eqref{eq:Z_def}. In the following, we use the wZF rate surrogate \eqref{eq:wZF_rate_stat} and MRC rate surrogate \eqref{eq:rate_MRC_closedform} as the objective function in the algorithmic design, respectively
\begin{subequations}
\label{prob:P3_MRC_wZF}
\begin{align}
\text{(P3-MRC)}:\quad
\max_{\boldsymbol{F}\in\mathcal{F}}
&\quad
\bar{R}_{\text{sum}}^{\mathrm{MRC}}(\boldsymbol{F})\triangleq
\sum_{k=1}^K
\bar{R}_k^{\mathrm{MRC}}(\boldsymbol{F}),
\label{prob:P3_MRC_obj_new}
\\
\text{(P3-wZF)}:\quad
\max_{\boldsymbol{F}\in\mathcal{F}}
&\quad
\bar{R}_{\text{sum}}^{\mathrm{wZF}}(\boldsymbol{F})\triangleq
\sum_{k=1}^K
\bar{R}_k^{\mathrm{wZF}}(\boldsymbol{F}).
\label{prob:P3_wZF_obj_new}
\end{align}
\end{subequations}
The difficulty lies in the highly non-convex dependence of the closed-form rate surrogates
$\bar{R}_k^{\mathrm{MRC}}(\boldsymbol{F})$ and $\bar{R}_k^{\mathrm{wZF}}(\boldsymbol{F})$ on the RA rotations $\boldsymbol{F}$ via the Rician means $\boldsymbol{\mu}_k(\boldsymbol{F})$, covariances $\boldsymbol{R}_k(\boldsymbol{F})$, and LMMSE error covariances $\boldsymbol{C}_{\mathrm{e},k}(\boldsymbol{F})$. 
However, the resulting optimization problems are piecewise-smooth constrained maximization problems over the product feasible set $\mathcal{F}=\mathcal{F}_1\times
\cdots\times \mathcal{F}_N$, where each $\mathcal{F}_n$ is a spherical cap.
This structure naturally motivates a projected-gradient method:
the gradient step is taken in the ambient Euclidean space, then projected
onto the tangent space of the unit sphere to preserve unit norm to first
order, and finally projected onto the spherical cap to satisfy the tilt
constraint. Therefore, we adopt a projected gradient ascent framework combined with the closed-form rate approximations in Section~\ref{sec:rate_analysis} for MRC and wZF receivers \cite{ref31}. 

\vspace{-12pt}
\subsection{Derivatives of the Channel Statistics}
\label{subsec:orientation_derivatives}

The dependence of the objectives on $\boldsymbol{F}$ is fully inherited from
the statistics $\{\boldsymbol{\mu}_k(\boldsymbol{F}),\boldsymbol{R}_k(\boldsymbol{F}),
\boldsymbol{C}_{\hat{h}_k}(\boldsymbol{F}),\boldsymbol{C}_{\mathrm{e},k}(\boldsymbol{F})\}$.
Since the $n$-th RA orientation vector $\boldsymbol{f}_n$ only affects the
$n$-th entry of each $\boldsymbol{\mu}_k(\boldsymbol{F})$ and the $n$-th row of
each $\boldsymbol{B}_k(\boldsymbol{F})$, the corresponding derivatives admit
a sparse structure, which is crucial for obtaining implementable gradients.
From \eqref{eq:element_gain}, given a direction $\boldsymbol{s}$, the directional gain derivative is
\begin{align}
\nabla_{\boldsymbol{f}_n}G(\boldsymbol{f}_n,\boldsymbol{s})
=
\begin{cases}
2bG_0(\boldsymbol{f}_n^T\boldsymbol{s})^{2b-1}\boldsymbol{s}, &
\boldsymbol{f}_n^T\boldsymbol{s}>0,\\
\boldsymbol{0}, & \boldsymbol{f}_n^T\boldsymbol{s}\le 0.
\end{cases}
\label{eq:gain_grad}
\end{align}
At the boundary $\boldsymbol{f}_n^T\boldsymbol{s}=0$, which occurs on a
measure-zero set, we use the zero subgradient in implementation.
Let $\boldsymbol{\delta}_n\in\mathbb{R}^{N\times 1}$ denote the $n$-th
canonical basis vector, i.e., the vector whose $n$-th entry is one and all the others are zero. From \eqref{eq:hk_n_LoS}, the $n$-th entry of
$\boldsymbol{\mu}_k(\boldsymbol{F})$ is
\begin{align}
\mu_{k,n}(\boldsymbol{F})
=
\sqrt{\varrho\frac{G_0}{4\pi r_{k,n}^2}}
[\boldsymbol{f}_n^T\boldsymbol{s}_{k,n}]_+^{\,b}
e^{-j\frac{2\pi}{\lambda}r_{k,n}}.
\label{eq:mu_kn_closed}
\end{align}
For $m\in\{\mathrm{x},\mathrm{y},\mathrm{z}\}$, using \eqref{eq:gain_grad}, we obtain
\begin{align}
&\frac{\partial \mu_{k,n}(\boldsymbol{F})} {\partial [\boldsymbol{f}_n]_m}
=
\begin{cases}
\dfrac{b\,\mu_{k,n}(\boldsymbol{F})}{\boldsymbol{f}_n^T\boldsymbol{s}_{k,n}}
[\boldsymbol{s}_{k,n}]_m,
& \boldsymbol{f}_n^T\boldsymbol{s}_{k,n}>0,\\
\boldsymbol{0}, & \boldsymbol{f}_n^T\boldsymbol{s}_{k,n}\le 0.
\end{cases}
\label{eq:mu_kn_m}
\\
&\frac{\partial \boldsymbol{\mu}_k(\boldsymbol{F})}
{\partial [\boldsymbol{f}_n]_m}
=
\boldsymbol{\delta}_n\,\frac{\partial \mu_{k,n}(\boldsymbol{F})} {\partial [\boldsymbol{f}_n]_m}.
\label{eq:mu_vec_derivative}
\end{align}
Next, from \eqref{eq:hk_expression} and \eqref{eq:hk_n_NLoS}, the $(n,q)$-th
entry of $\boldsymbol{B}_k(\boldsymbol{F})$ is
\begin{align}
\![\boldsymbol{B}_k(\boldsymbol{F})]_{n,q}
\!=\!
\frac{\sqrt{\varrho\sigma_q G_0}}{\sqrt{4\pi}r_{q,n}d_{q,k}}
[\boldsymbol{f}_n^T\boldsymbol{s}_{q,n}]_+^{b}
e^{-j\frac{2\pi}{\lambda}(r_{q,n}\!+d_{q,k})}.\!\!\!
\label{eq:B_knq_closed}
\end{align}
Then, similar to \eqref{eq:mu_kn_m}, the derivative of \eqref{eq:B_knq_closed} is 
\begin{align}
&\frac{\partial [\boldsymbol{B}_k(\boldsymbol{F})]_{n,q}} {\partial [\boldsymbol{f}_n]_m}
\!=\!
\begin{cases}
\dfrac{b\,[\boldsymbol{B}_k(\boldsymbol{F})]_{n,q}}{\boldsymbol{f}_n^T\boldsymbol{s}_{q,n}}
[\boldsymbol{s}_{q,n}]_m,
& \!\!\!\boldsymbol{f}_n^T\boldsymbol{s}_{q,n}>0,\\
0, & \!\!\!\boldsymbol{f}_n^T\boldsymbol{s}_{q,n}\le 0.
\end{cases} \!\!
\label{eq:B_knq_m}
\\
&\frac{\partial \boldsymbol{B}_k(\boldsymbol{F})}
{\partial [\boldsymbol{f}_n]_m}
=
\boldsymbol{\delta}_n \big(\boldsymbol{b}_{k,n}^{(m)}(\boldsymbol{F})\big)^T,
\label{eq:B_mat_derivative}
\end{align}
where $\boldsymbol{b}_{k,n}^{(m)}(\boldsymbol{F})$ collects \eqref{eq:B_knq_m} over $Q$ scattering paths
\begin{align}
\boldsymbol{b}_{k,n}^{(m)}(\boldsymbol{F})
\triangleq
\big[
\frac{\partial [\boldsymbol{B}_k(\boldsymbol{F})]_{n,1}} {\partial [\boldsymbol{f}_n]_m},\ldots,
\frac{\partial [\boldsymbol{B}_k(\boldsymbol{F})]_{n,Q}} {\partial [\boldsymbol{f}_n]_m}
\big]^T.
\label{eq:b_kn_m_vec}
\end{align}
Since $\boldsymbol{R}_k(\boldsymbol{F})=\boldsymbol{B}_k(\boldsymbol{F})
\boldsymbol{B}_k^H(\boldsymbol{F})$, it follows that
\begin{align}
\frac{\partial \boldsymbol{R}_k(\boldsymbol{F})}
{\partial [\boldsymbol{f}_n]_m}
=
\boldsymbol{\delta}_n \big(\boldsymbol{b}_{k,n}^{(m)}\big)^T
\boldsymbol{B}_k^H
+
\boldsymbol{B}_k
\big(\boldsymbol{b}_{k,n}^{(m)}\big)^*
\boldsymbol{\delta}_n^T.
\label{eq:R_kn_m}
\end{align}
Recall $a_k\triangleq \tau_{\mathrm{p}}p_k^{\mathrm{tr}}$ and denote 
$\boldsymbol{A}_k(\boldsymbol{F})
\triangleq
a_k\boldsymbol{R}_k(\boldsymbol{F})+\sigma^2\boldsymbol{I}_N$.
Using \eqref{eq:est_covariance} and 
$\mathrm{d}\boldsymbol{A}_k^{-1}
=
-\boldsymbol{A}_k^{-1}(\mathrm{d}\boldsymbol{A}_k)\boldsymbol{A}_k^{-1}$,
the derivative of the LMMSE covariance is
\begin{align}
&\frac{\partial \boldsymbol{C}_{\hat{h}_k}(\boldsymbol{F})}
{\partial [\boldsymbol{f}_n]_m} \nonumber
\\
&=
\!a_k\!\frac{\partial \boldsymbol{R}_k(\boldsymbol{F})} {\partial [\boldsymbol{f}_n]_m}\boldsymbol{A}_k^{-1}(\boldsymbol{F})\boldsymbol{R}_k(\boldsymbol{F})
\!+\!
a_k\boldsymbol{R}_k(\boldsymbol{F})\boldsymbol{A}_k^{-1}(\boldsymbol{F})\frac{\partial \boldsymbol{R}_k(\boldsymbol{F})} {\partial [\boldsymbol{f}_n]_m}
\nonumber\\
&\quad
-a_k^2\boldsymbol{R}_k(\boldsymbol{F})\boldsymbol{A}_k^{-1}(\boldsymbol{F})
\frac{\partial \boldsymbol{R}_k(\boldsymbol{F})} {\partial [\boldsymbol{f}_n]_m}
\boldsymbol{A}_k^{-1}(\boldsymbol{F})\boldsymbol{R}_k(\boldsymbol{F}).
\label{eq:Chat_kn_m}
\end{align}
Accordingly, we have 
\begin{align}
\frac{\partial \boldsymbol{C}_{\mathrm{e},k}(\boldsymbol{F})}
{\partial [\boldsymbol{f}_n]_m}
=
\frac{\partial \boldsymbol{R}_k(\boldsymbol{F})} {\partial [\boldsymbol{f}_n]_m}
-
\frac{\partial \boldsymbol{C}_{\hat{h}_k}(\boldsymbol{F})} {\partial [\boldsymbol{f}_n]_m}.
\end{align}
Finally, define the derivatives of the second-order moment matrices
\begin{align}
\frac{\partial \boldsymbol{\Sigma}_{h,k}(\boldsymbol{F})}
{\partial [\boldsymbol{f}_n]_m}
=&
\frac{\partial \boldsymbol{R}_k(\boldsymbol{F})} {\partial [\boldsymbol{f}_n]_m}
+\boldsymbol{\delta}_n\frac{\partial \mu_{k,n}(\boldsymbol{F})}
{\partial [\boldsymbol{f}_n]_m}\boldsymbol{\mu}_k^H(\boldsymbol{F}) \nonumber
\\
&+\boldsymbol{\mu}_k(\boldsymbol{F})\Big(\frac{\partial \mu_{k,n}(\boldsymbol{F})}
{\partial [\boldsymbol{f}_n]_m}\Big)^{*}\boldsymbol{\delta}_n^T,
\label{eq:Sigma_h_kn_m}
\\
\frac{\partial \boldsymbol{\Sigma}_{\hat{h},k}(\boldsymbol{F})}
{\partial [\boldsymbol{f}_n]_m}
=&
\frac{\partial \boldsymbol{C}_{\hat{h}_k}(\boldsymbol{F})}
{\partial [\boldsymbol{f}_n]_m} 
+\boldsymbol{\delta}_n\frac{\partial \mu_{k,n}(\boldsymbol{F})}
{\partial [\boldsymbol{f}_n]_m}\boldsymbol{\mu}_k^H(\boldsymbol{F})\nonumber
\\
&+\boldsymbol{\mu}_k(\boldsymbol{F})\Big(\frac{\partial \mu_{k,n}(\boldsymbol{F})}
{\partial [\boldsymbol{f}_n]_m}\Big)^{*}\boldsymbol{\delta}_n^T.
\label{eq:Sigma_hhat_kn_m}
\end{align}

\subsection{Gradient of the MRC Sum-Rate Surrogate}
\label{subsec:MRC_gradient}

For the MRC surrogate SINR \eqref{eq:gamma_MRC_closedform} with \eqref{eq:MRC_I_term}, using \eqref{eq:MRC_signal_mean} and \eqref{eq:mu_vec_derivative} combined with \eqref{eq:Chat_kn_m}, we obtain
\begin{align}
\!\frac{\partial \alpha_k(\boldsymbol{F})} {\partial [\boldsymbol{f}_n]_m}
\!=\!
2\Rea\Big\{
\mu_{k,n}^*(\boldsymbol{F})\frac{\partial \mu_{k,n}(\boldsymbol{F})} {\partial [\boldsymbol{f}_n]_m}
\!\Big\}
\!+\!
\tr\Big(
\frac{\partial \boldsymbol{C}_{\hat{h}_k}(\boldsymbol{F})} {\partial [\boldsymbol{f}_n]_m}
\!\Big). \!\!
\label{eq:alpha_kn_m}
\end{align}
Next, for the multiuser interference term $\varTheta_{i,k}(\boldsymbol{F})$ in \eqref{eq:MRC_interf_i_neq_k}, 
\begin{align}
\frac{\partial \varTheta_{i,k}(\boldsymbol{F})} {\partial [\boldsymbol{f}_n]_m}
=
\tr\Big(
\frac{\partial \boldsymbol{\Sigma}_{h,i}(\boldsymbol{F})}
{\partial [\boldsymbol{f}_n]_m}
\boldsymbol{\Sigma}_{\hat{h},k}(\boldsymbol{F}) +
\boldsymbol{\Sigma}_{h,i}(\boldsymbol{F})
\frac{\partial \boldsymbol{\Sigma}_{\hat{h},k}(\boldsymbol{F})} {\partial [\boldsymbol{f}_n]_m}
\Big).
\label{eq:Theta_kn_m}
\end{align}
For the self-interference term
$\varPhi_k(\boldsymbol{F})$ in \eqref{eq:MRC_selfinterf_neat}, 
\begin{align}
&\frac{\partial \varPhi_k(\boldsymbol{F})} {\partial [\boldsymbol{f}_n]_m} \nonumber
\\
&=
2\tr\!\Big(
\boldsymbol{C}_{\hat{h}_k}(\boldsymbol{F})
\frac{\partial \boldsymbol{C}_{\hat{h}_k}(\boldsymbol{F})} {\partial [\boldsymbol{f}_n]_m}
\Big)
+
2\boldsymbol{\mu}_k^H(\boldsymbol{F})
\frac{\partial \boldsymbol{C}_{\hat{h}_k}(\boldsymbol{F})} {\partial [\boldsymbol{f}_n]_m}
\boldsymbol{\mu}_k(\boldsymbol{F})
\nonumber\\
&\quad
+
4\Rea\Big\{
\Big(\frac{\partial \mu_{k,n}(\boldsymbol{F})}
{\partial [\boldsymbol{f}_n]_m}\Big)^{*}(\boldsymbol{F})
\big[
\boldsymbol{C}_{\hat{h}_k}(\boldsymbol{F})
\boldsymbol{\mu}_k(\boldsymbol{F})
\big]_n
\Big\}
\nonumber\\
&\quad+
\tr\Big(
\frac{\partial \boldsymbol{C}_{\mathrm{e},k}(\boldsymbol{F})} {\partial [\boldsymbol{f}_n]_m}
\boldsymbol{\Sigma}_{\hat{h},k}(\boldsymbol{F})
\!+\!
\boldsymbol{C}_{\mathrm{e},k}(\boldsymbol{F})
\frac{\partial \boldsymbol{\Sigma}_{\hat{h},k}(\boldsymbol{F})} {\partial [\boldsymbol{f}_n]_m}
\Big).\!\!\!
\label{eq:Phi_tilde_kn_m}
\end{align}
Hence, the derivative of the MRC denominator is
\begin{align}
\frac{\partial I_k^{\mathrm{MRC}}(\boldsymbol{F})}
{\partial [\boldsymbol{f}_n]_m}
=
p_k\frac{\partial \varPhi_k(\boldsymbol{F})} {\partial [\boldsymbol{f}_n]_m}
+
\sum_{i\neq k}p_i\frac{\partial \varTheta_{i,k}(\boldsymbol{F})} {\partial [\boldsymbol{f}_n]_m}
+
\sigma^2\frac{\partial \alpha_k(\boldsymbol{F})} {\partial [\boldsymbol{f}_n]_m}.
\label{eq:I_MRC_kn_m}
\end{align}
Therefore, the derivative of the MRC SINR surrogate is
\begin{align}
\frac{\partial \bar{\gamma}_k^{\mathrm{MRC}}(\boldsymbol{F})}
{\partial [\boldsymbol{f}_n]_m}
\!=\!
p_k\frac{
2\alpha_k(\boldsymbol{F})I_k^{\mathrm{MRC}}(\boldsymbol{F})
\frac{\partial \alpha_k(\boldsymbol{F})} {\partial [\boldsymbol{f}_n]_m}
\!-\!
\alpha_k^2(\boldsymbol{F})
\frac{\partial I_k^{\mathrm{MRC}}(\boldsymbol{F})} {\partial [\boldsymbol{f}_n]_m}
}{
\big(I_k^{\mathrm{MRC}}(\boldsymbol{F})\big)^2
}.
\label{eq:gamma_MRC_kn_m}
\end{align}
Stacking the three partial derivatives gives the gradient
\begin{align}
\nabla_{\boldsymbol{f}_n} \bar{R}_{\text{sum}}^{\mathrm{MRC}}(\boldsymbol{F})
=
\frac{\eta}{\ln 2}
\sum_{k=1}^K
\frac{1}{1+\bar{\gamma}_k^{\mathrm{MRC}}(\boldsymbol{F})}
\begin{bmatrix}
\frac{\partial \bar{\gamma}_k^{\mathrm{MRC}}}{\partial [\boldsymbol{f}_n]_\mathrm{x}}\\
\frac{\partial \bar{\gamma}_k^{\mathrm{MRC}}}{\partial [\boldsymbol{f}_n]_\mathrm{y}}\\
\frac{\partial \bar{\gamma}_k^{\mathrm{MRC}}}{\partial [\boldsymbol{f}_n]_\mathrm{z}}
\end{bmatrix}.
\label{eq:grad_MRC_final}
\end{align}

\vspace{-12pt}
\subsection{Gradient of the wZF Sum-Rate Surrogate}
\label{subsec:wZF_gradient}
For the wZF surrogate \eqref{eq:wZF_SINR_stat}, define $\Psi_k(\boldsymbol{F})
\triangleq
[\bar{\boldsymbol{S}}^{-1}(\boldsymbol{F})]_{k,k}$, then $\bar{\gamma}_k^{\mathrm{wZF}}(\boldsymbol{F})
=
\frac{p_k}{\Psi_k(\boldsymbol{F})}$.
According to \eqref{eq:Z_def}, we have
\begin{align}
&\frac{\partial \boldsymbol{Z}(\boldsymbol{F})}
{\partial [\boldsymbol{f}_n]_m}
=
\sum_{i=1}^K p_i\frac{\partial \boldsymbol{C}_{\mathrm{e},i}(\boldsymbol{F})}
{\partial [\boldsymbol{f}_n]_m},
\\
&\frac{\partial \boldsymbol{Z}^{-1}(\boldsymbol{F})}
{\partial [\boldsymbol{f}_n]_m}
=
-\boldsymbol{Z}^{-1}(\boldsymbol{F})
\frac{\partial \boldsymbol{Z}(\boldsymbol{F})} {\partial [\boldsymbol{f}_n]_m}
\boldsymbol{Z}^{-1}(\boldsymbol{F}).
\label{eq:Zinv_kn_m}
\end{align}
To handle \eqref{eq:Sbar_closed}, we compute the derivative of the LoS matrix
\begin{align}
\frac{\partial \boldsymbol{M}(\boldsymbol{F})}
{\partial [\boldsymbol{f}_n]_m}
=
\Big[
\boldsymbol{\delta}_n\frac{\partial \mu_{1,n}(\boldsymbol{F})} {\partial [\boldsymbol{f}_n]_m},
\ldots,
\boldsymbol{\delta}_n\frac{\partial \mu_{K,n}(\boldsymbol{F})} {\partial [\boldsymbol{f}_n]_m}
\Big].
\label{eq:M_kn_m}
\end{align}
Further, we derive 
\begin{align}
\chi_{k,n}^{(m)}(\boldsymbol{F})
&\triangleq
\frac{\partial \tr\big(
\boldsymbol{Z}^{-1}(\boldsymbol{F})
\boldsymbol{C}_{\hat{h}_k}(\boldsymbol{F})
\big)}{\partial [\boldsymbol{f}_n]_m}
\\
&= 
\tr\Big(
\frac{\partial \boldsymbol{Z}^{-1}(\boldsymbol{F})} {\partial [\boldsymbol{f}_n]_m}
\boldsymbol{C}_{\hat{h}_k}(\boldsymbol{F})
+
\boldsymbol{Z}^{-1}(\boldsymbol{F})
\frac{\partial \boldsymbol{C}_{\hat{h}_k}(\boldsymbol{F})} {\partial [\boldsymbol{f}_n]_m}
 \Big).
\label{eq:tau_kn_m}
\end{align}
Let $\boldsymbol{D}_{n}^{(m)}(\boldsymbol{F})
\triangleq
\diag\!\big(
\chi_{1,n}^{(m)}(\boldsymbol{F}),\ldots,
\chi_{K,n}^{(m)}(\boldsymbol{F})
\big)$.
Then, differentiating \eqref{eq:Sbar_closed}, we obtain
\begin{align}
\frac{\partial \bar{\boldsymbol{S}}(\boldsymbol{F})}
{\partial [\boldsymbol{f}_n]_m}
&=
\boldsymbol{D}_{n}^{(m)}(\boldsymbol{F})
+
\Big(\frac{\partial \boldsymbol{M}(\boldsymbol{F})} {\partial [\boldsymbol{f}_n]_m}(\boldsymbol{F})\Big)^H
\boldsymbol{Z}^{-1}(\boldsymbol{F})
\boldsymbol{M}(\boldsymbol{F})
\nonumber\\
&\quad
+
\boldsymbol{M}^{H}(\boldsymbol{F})
\boldsymbol{Z}^{-1}(\boldsymbol{F})
\frac{\partial \boldsymbol{M}(\boldsymbol{F})} {\partial [\boldsymbol{f}_n]_m}
\nonumber\\
&\quad
+
\boldsymbol{M}^{H}(\boldsymbol{F})
\frac{\partial \boldsymbol{Z}^{-1}(\boldsymbol{F})} {\partial [\boldsymbol{f}_n]_m}
\boldsymbol{M}(\boldsymbol{F}).
\label{eq:Sbar_kn_m}
\end{align}

Using the matrix inverse identity
\begin{align}
\frac{\partial \bar{\boldsymbol{S}}^{-1}(\boldsymbol{F})}
{\partial [\boldsymbol{f}_n]_m}
=
-\bar{\boldsymbol{S}}^{-1}(\boldsymbol{F})
\frac{\partial \bar{\boldsymbol{S}}(\boldsymbol{F})} {\partial [\boldsymbol{f}_n]_m}
\bar{\boldsymbol{S}}^{-1}(\boldsymbol{F}),
\label{eq:Sbar_inv_derivative}
\end{align}
the derivative of $\Psi_k(\boldsymbol{F})$ is
\begin{align}
\frac{\partial \Psi_k(\boldsymbol{F})}
{\partial [\boldsymbol{f}_n]_m}
=
-
\Big[
\bar{\boldsymbol{S}}^{-1}(\boldsymbol{F})
\frac{\partial \bar{\boldsymbol{S}}(\boldsymbol{F})} {\partial [\boldsymbol{f}_n]_m}
\bar{\boldsymbol{S}}^{-1}(\boldsymbol{F})
\Big]_{k,k}.
\label{eq:Psi_kn_m}
\end{align}
Therefore, the derivative of the wZF SINR surrogate is
\begin{align}
\frac{\partial \bar{\gamma}_k^{\mathrm{wZF}}(\boldsymbol{F})}
{\partial [\boldsymbol{f}_n]_m}
=
\frac{p_k}{\Psi_k^2(\boldsymbol{F})}
\Big[
\bar{\boldsymbol{S}}^{-1}(\boldsymbol{F})
\frac{\partial \bar{\boldsymbol{S}}(\boldsymbol{F})} {\partial [\boldsymbol{f}_n]_m}
\bar{\boldsymbol{S}}^{-1}(\boldsymbol{F})
\Big]_{k,k}.
\label{eq:gamma_wZF_kn_m}
\end{align}
Hence, the gradient of the wZF sum-rate surrogate is
\begin{align}
\nabla_{\boldsymbol{f}_n} \bar{R}^{\mathrm{wZF}}_{\text{sum}}(\boldsymbol{F})
=
\frac{\eta}{\ln 2}
\sum_{k=1}^K
\frac{1}{1+\bar{\gamma}_k^{\mathrm{wZF}}(\boldsymbol{F})}
\begin{bmatrix}
\frac{\partial \bar{\gamma}_k^{\mathrm{wZF}}}{\partial [\boldsymbol{f}_n]_\mathrm{x}}\\
\frac{\partial \bar{\gamma}_k^{\mathrm{wZF}}}{\partial [\boldsymbol{f}_n]_\mathrm{y}}\\
\frac{\partial \bar{\gamma}_k^{\mathrm{wZF}}}{\partial [\boldsymbol{f}_n]_\mathrm{z}}
\end{bmatrix}.
\label{eq:grad_wZF_final}
\end{align}

\vspace{-12pt}
\subsection{Projected Gradient Ascent with Spherical-Cap Projection}
\label{subsec:pga}

Given either objective $\bar{R}^{\mathrm{BF}}_{\text{sum}}(\boldsymbol{F})$ with
$\mathrm{BF}\in\{\mathrm{MRC},\mathrm{wZF}\}$, define the gradient
\begin{align}
\!\!\nabla\!_{\boldsymbol{f}_n}\!\bar{R}^{\mathrm{BF}}_{\text{sum}}(\boldsymbol{F})
\!=\!\!
\bigg[\!\frac{\partial \bar{R}_{\text{sum}}^{\text{BF}}( \boldsymbol{F} )}{\partial \left[ \boldsymbol{f}_n \right] _{\text{x}}},\frac{\partial \bar{R}_{\text{sum}}^{\text{BF}}( \boldsymbol{F} )}{\partial \left[ \boldsymbol{f}_n \right] _{\text{y}}},\frac{\partial \bar{R}_{\text{sum}}^{\text{BF}}( \boldsymbol{F} )}{\partial \left[ \boldsymbol{f}_n \right] _{\text{z}}}\! \bigg]^T. \!\!\!
\label{eq:euclid_grad_def_new}
\end{align}
To preserve the unit-norm constraint, we first remove the radial component
\begin{align}
\boldsymbol{g}_n^{(\ell)}
=
\Big(
\boldsymbol{I}_3-\boldsymbol{f}_n^{(\ell)}(\boldsymbol{f}_n^{(\ell)})^T
\Big)
\nabla_{\boldsymbol{f}_n}
\bar{R}^{\mathrm{BF}}_{\text{sum}}(\boldsymbol{F}^{(\ell)}).
\label{eq:tangent_proj_update}
\end{align}
This is simply the Euclidean projection of the gradient onto the tangent space
of the sphere. Given a step size $\alpha^{(\ell)}>0$, the tentative update is
\begin{align}
\tilde{\boldsymbol{f}}_n^{(\ell+1)}
=
\frac{
\boldsymbol{f}_n^{(\ell)}+\alpha^{(\ell)}\boldsymbol{g}_n^{(\ell)}
}{
\big\|
\boldsymbol{f}_n^{(\ell)}+\alpha^{(\ell)}\boldsymbol{g}_n^{(\ell)}
\big\|
},
\label{eq:retraction}
\end{align}
which is then projected onto the spherical cap $\mathcal{F}_n$ via \eqref{eq:cone_projection},
\begin{figure*}
\begin{align}
\!&\qquad \qquad \qquad \boldsymbol{f}_n^{(\ell+1)}
=
\Pi_{\mathcal{F}_n}\big(\tilde{\boldsymbol{f}}_n^{(\ell+1)}\big)
=
\begin{cases}
\tilde{\boldsymbol{f}}_n^{(\ell+1)}, &
\arccos\!\big((\tilde{\boldsymbol{f}}_n^{(\ell+1)})^T\boldsymbol{e}_z\big)
\le \theta_{\max},
\\
\cos\theta_{\max}\boldsymbol{e}_z
+
\sin\theta_{\max}
\dfrac{
\boldsymbol{a}_n^{(\ell+1)}
}{
\|\boldsymbol{a}_n^{(\ell+1)}\|
}, & \text{otherwise},
\end{cases}
\label{eq:cone_projection}
\\
\!&\overline{\ \ \ \ \ \ \ \ \ \ \ \ \ \ \ \ \ \ \ \ \ \ \ \ \ \ \ \ \ \ \ \ \ \ \ \ \ \ \ \ \ \ \ \ \ \ \ \ \ \ \ \ \ \ \ \ \ \ \ \ \ \ \ \ \ \ \ \ \ \ \ \ \ \ \ \ \ \ \ \ \ \ \ \ \ \ \ \ \ \ \ \ \ \ \ \ \ \ \ \ \ \ \ \ \ \ \ \ \ \ \ \ \ \ \ \ \ \ \ \ \ \ \ \ \ \ \ \ \ \ \ \ \ \ \ \ \ \ \ \ \ \ \ \ \ } \nonumber
\end{align}
\vspace{-33pt}
\end{figure*}
where
\begin{align}
\boldsymbol{a}_n^{(\ell+1)}
\triangleq
\tilde{\boldsymbol{f}}_n^{(\ell+1)}
-
\big((\tilde{\boldsymbol{f}}_n^{(\ell+1)})^T\boldsymbol{e}_z\big)\boldsymbol{e}_z.
\label{eq:horizontal_component}
\end{align}
The step size is selected by Armijo backtracking: choose the largest
$\alpha^{(\ell)}=\rho^q\alpha_0$ with $\rho\in(0,1)$ such that
\begin{align}
\bar{R}^{\mathrm{BF}}_{\text{sum}}(\boldsymbol{F}^{(\ell+1)})
\ge
\bar{R}^{\mathrm{BF}}_{\text{sum}}(\boldsymbol{F}^{(\ell)})
+
c\,\alpha^{(\ell)}
\sum_{n=1}^N
\|\boldsymbol{g}_n^{(\ell)}\|^2,
\label{eq:armijo_rule}
\end{align}
where $c\in(0,1)$ is the Armijo parameter. The iterations stop when the norm of effective gradient is below a predetermined threshold $\epsilon >0$ as $
\max_{n\in\mathcal{N}}\|\boldsymbol{g}_n^{(\ell)}\|
\le \epsilon$ or the maximum number of iterations is reached.

\begin{algorithm}[t]
\caption{Projected Gradient Ascent for RA Rotation}
\label{alg:PGA}
\small
\begin{algorithmic}[1]
\STATE \textbf{Input:} Initial $\boldsymbol{F}^{(0)}\in\mathcal{F}$, tolerance $\epsilon$, maximum iteration number $L_{\max}$, receiver type $\mathrm{BF}\in\{\mathrm{MRC},\mathrm{wZF}\}$.
\FOR{$\ell=0,1,\ldots,L_{\max}-1$}
\STATE Compute $\nabla_{\boldsymbol{f}_n}\bar{R}^{\mathrm{BF}}_{\text{sum}}(\boldsymbol{F}^{(\ell)})$ for all $n$ using \eqref{eq:grad_MRC_final} if $\mathrm{BF}=\mathrm{MRC}$, or \eqref{eq:grad_wZF_final} if $\mathrm{BF}=\mathrm{wZF}$.
\STATE Compute projected ascent directions
$\boldsymbol{g}_n^{(\ell)}$ by \eqref{eq:tangent_proj_update}.
\IF{$\max_n\|\boldsymbol{g}_n^{(\ell)}\|\le \epsilon$}
\STATE \textbf{break}
\ENDIF
\STATE Choose $\alpha^{(\ell)}$ by Armijo backtracking using \eqref{eq:armijo_rule}.
\STATE Update $\tilde{\boldsymbol{f}}_n^{(\ell+1)}$ by \eqref{eq:retraction}, and project onto $\mathcal{F}_n$ via \eqref{eq:cone_projection}.
\STATE Form $\boldsymbol{F}^{(\ell+1)}=[\boldsymbol{f}_1^{(\ell+1)},\ldots,\boldsymbol{f}_N^{(\ell+1)}]$.
\ENDFOR
\STATE \textbf{Output:} $\boldsymbol{F}^{\star}=\boldsymbol{F}^{(\ell+1)}$.
\end{algorithmic}
\end{algorithm}

\vspace{-6pt}
\subsection{Convergence and Complexity}
\vspace{-3pt}
The above algorithm is summarized in Algorithm~\ref{alg:PGA}. 
Let $\bar{R}^{\mathrm{BF}}_{\text{sum}}(\boldsymbol{F})$ be either
$\bar{R}^{\mathrm{MRC}}(\boldsymbol{F})$ or $\bar{R}^{\mathrm{wZF}}(\boldsymbol{F})$.
Under Algorithm \ref{alg:PGA} with Armijo backtracking \eqref{eq:armijo_rule}, the sequence
$\{\bar{R}^{\mathrm{BF}}_{\text{sum}}(\boldsymbol{F}^{(\ell)})\}$ is non-decreasing, i.e., $\bar{R}^{\mathrm{BF}}_{\text{sum}}(\boldsymbol{F}^{(\ell+1)})
\ge
\bar{R}^{\mathrm{BF}}_{\text{sum}}(\boldsymbol{F}^{(\ell)})$
for all $\ell$. Since the objective is continuous and $\mathcal{F}$ is compact, the objective sequence is upper bounded and therefore convergent.
In addition, the algorithm terminates at a point satisfying the projected-gradient stopping criterion, which serves as a Clarke stationary point candidate for \eqref{prob:P3_MRC_obj_new} or \eqref{prob:P3_wZF_obj_new}.

The algorithm is implementation-friendly. The sparsity structure in
\eqref{eq:mu_vec_derivative} and \eqref{eq:B_mat_derivative} implies that
changing $\boldsymbol{f}_n$ affects only the $n$-th entry/row of the channel
statistics, which substantially reduces the complexity of gradient evaluation.
Assuming direct matrix inversions for the matrices $\boldsymbol{Z}(\boldsymbol{F})$, $\bar{\boldsymbol{S}}(\boldsymbol{F})$ and $\boldsymbol{A}(\boldsymbol{F})$ in
\eqref{eq:Z_def}, \eqref{eq:Sbar_def} and \eqref{eq:Chat_kn_m}, the dominant per-iteration
cost of the MRC design is $\mathcal{O}(KN^3+K^2N^2)$, while that of the wZF
design is $\mathcal{O}((K+1)N^3+K^2N^2+K^3)$ due to the additional inversions
of $\boldsymbol{Z}(\boldsymbol{F})$ and $\bar{\boldsymbol{S}}(\boldsymbol{F})$. Note that the above complexity orders assume that the common matrices $\{\boldsymbol{A}_k^{-1}(\boldsymbol{F})\}_{k=1}^K$, $\boldsymbol{Z}^{-1}(\boldsymbol{F})$, and $\bar{\boldsymbol{S}}^{-1}(\boldsymbol{F})$ are computed once per iteration and then reused across all element-wise gradient evaluations.
In practice, both designs converge within a modest number of iterations, and
the wZF is more expensive but typically delivers larger gains.

\vspace{-6pt}
\section{Numerical Results}
\vspace{-3pt}
\label{sec:numerical_results}

Unless otherwise specified, we consider a BS equipped with an $N=N_{\mathrm{row}}N_{\mathrm{col}}=2\times 4$ RA-enabled UPA operating at $f_c=6$~GHz with half-wavelength inter-element spacing, the number of users and scatterer clusters are $K=4$ and $Q=3$, respectively. The antenna directional parameter is set to $b=4$ with the corresponding maximum element gain $G_0=2(2b+1)=18$. The noise power is $\sigma^2=-80$ dBm, the path loss at reference distance 1 m is $-30$ dB, which is equal to set $\frac{\varrho}{4\pi}=10^{-3}~ \mathrm{m}^2$, the coherence block length is $T_c=200$, and the default maximum rotation angle is $\theta_{\max}=60^\circ$. We set $\tau_p = K$. All users transmit at the same maximum power in both the pilot and data phases, i.e., $p_k^{\mathrm{tr}}=p_k\triangleq p=20$ dBm, $\forall k\in\mathcal{K}$. The user locations are uniformly generated within the region with horizontal radius in $[0,300]$~m and height in $[100,200]$~m, while the scatterer clusters are generated within horizontal radius in $[0,350]$~m and height in $[50,250]$~m with $\sigma_q=\frac{100}{3} \mathrm{m}^2$. All curves are obtained by averaging over 3000 independent geometry realizations; for the curves labeled ``erg.'', which means ergodic, an additional block-level averaging over $100$ independent fast fading realizations is carried out for each geometry realization, whereas the curves labeled ``sur.'' are computed from the large-timescale closed-form surrogates derived in Section~\ref{sec:rate_analysis}. The label ``opt.~$\boldsymbol{F}$'' denotes the RA orientation matrix obtained by Algorithm~\ref{alg:PGA}; ``ran.~$\boldsymbol{F}$'' denotes random feasible rotations, where each boresight vector $f_n$ is independently sampled uniformly over the spherical cap $\mathcal{F}_n$; and ``fix.~$\boldsymbol{F}$'' in the insight figures refers to $\boldsymbol{f}_n=\boldsymbol{e}_z$, $\forall n\in\mathcal{N}$.

\setlength{\abovecaptionskip}{-3pt}
\begin{figure}[t]
\centering
\includegraphics[width=2.6in]{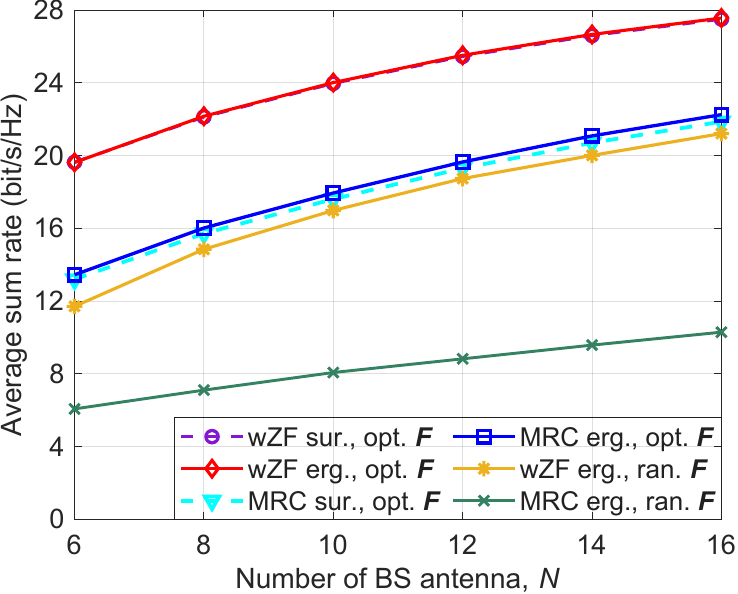}
\caption{Average sum rate versus the number of BS antennas.} 
\label{fig:N}
\vspace{-6pt}
\end{figure}

\begin{figure}[t]
\centering
\includegraphics[width=2.6in]{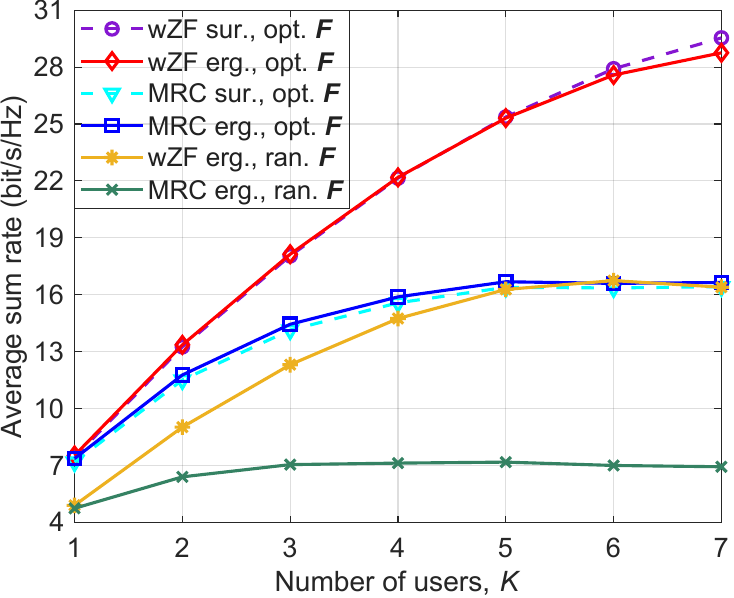}
\caption{Average sum rate versus the number of users.} 
\label{fig:K}
\vspace{-18pt}
\end{figure}

\begin{figure}[t]
\centering
\includegraphics[width=2.6in]{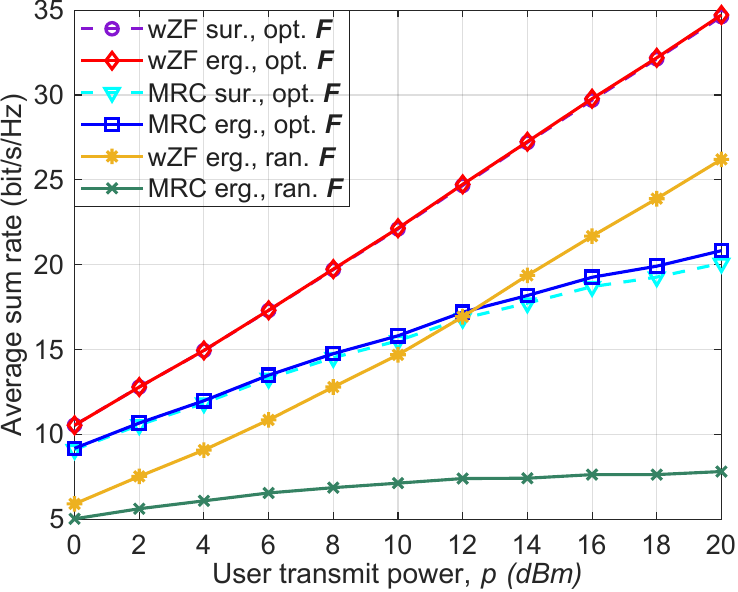}
\caption{Average sum rate versus the user transmit power.} 
\label{fig:P}
\vspace{-18pt}
\end{figure}

Figs.~\ref{fig:N}, \ref{fig:K}, and \ref{fig:P} show the average sum rate versus the number of BS antennas $N$, the number of users $K$, and the user transmit power $p$, respectively. A first and important observation is that, under the optimized rotation matrix, the surrogate curves and the corresponding ergodic curves are nearly indistinguishable for both MRC and wZF across all three figures. This confirms that the closed-form UatF surrogates in \eqref{eq:rate_MRC_closedform} and \eqref{eq:wZF_rate_stat} are sufficiently accurate for large-timescale RA design and, more importantly, that optimizing the surrogate objectives in \eqref{prob:P3_MRC_obj_new}-\eqref{prob:P3_wZF_obj_new} transfers to the actual block-level performance.
The superiority of the optimized RA design over random feasible rotations is also substantial. In Fig.~\ref{fig:N}, both optimized MRC and optimized wZF increase monotonically with $N$, but the wZF curve remains consistently above the MRC curve, indicating that RA rotation and interference-aware combining are complementary rather than competing mechanisms. In particular, increasing $N$ simultaneously strengthens the desired effective channels and enlarges the spatial DoF available to the colored-noise-aware wZF combiner. 

Fig.~\ref{fig:K} further highlights the different scaling laws of MRC and wZF. As the number of users increases, the optimized wZF sum rate continues to grow strongly, whereas the optimized MRC curve rises more slowly and saturates once the system enters an interference-limited regime. This behavior is consistent with the insights in Section~\ref{sec:rate_analysis}. The wZF surrogate exploits the inverse-whitened Gram matrix and thus benefits from RA-enabled user separability, whereas MRC remains governed by the coupled signal, interference, and noise terms in Remark~4. Therefore, once the user population becomes dense, the performance bottleneck of MRC is no longer signal strength but residual multiuser coupling, whereas wZF can still convert the additional users into multiplexing gain by jointly optimizing the orientations and the interference-aware combiner. Fig.~\ref{fig:P} conveys the same message from a different angle. The optimized wZF rate grows almost linearly with the transmit power in dB over the considered range, whereas the optimized MRC rate increases more slowly. This indicates that, after RA optimization, wZF behaves closer to a noise-limited regime over a wider operating region, while MRC transitions earlier into an interference-limited regime. The gap between optimized and random orientations also widens with $p$, showing that RA optimization becomes more valuable as the system moves from a noise-limited regime to a regime where control of interference geometry is critical.

\begin{figure}[t]
\centering
\includegraphics[width=2.6in]{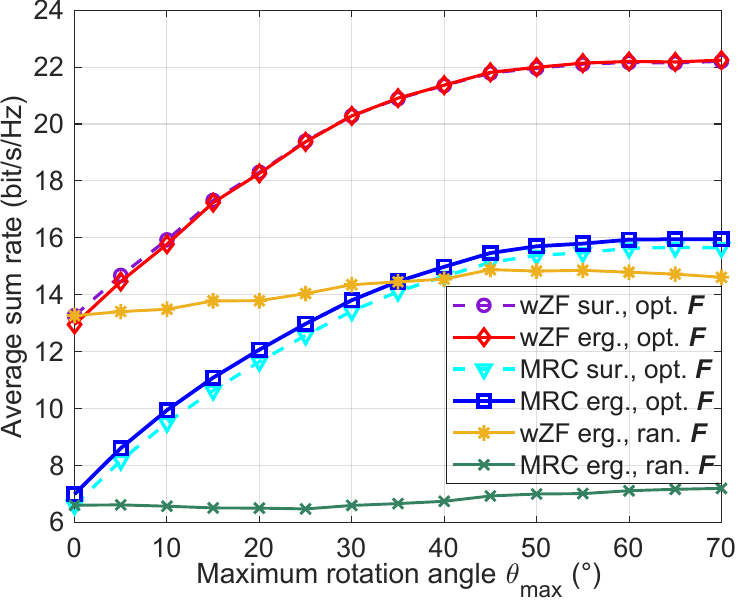}
\caption{Average sum rate versus the maximum rotation angle.} 
\label{fig:R}
\vspace{-6pt}
\end{figure}

\begin{figure}[t]
\centering
\includegraphics[width=2.6in]{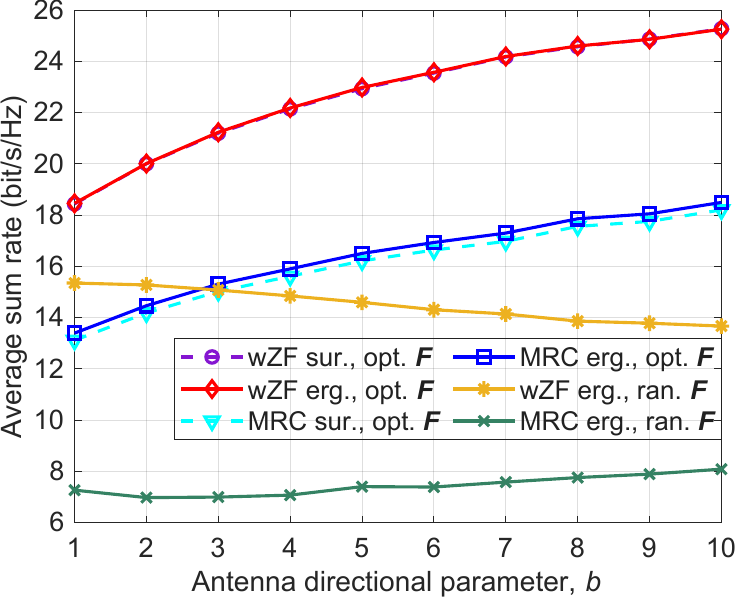}
\caption{Average sum rate versus the antenna directional parameter.} 
\label{fig:b}
\vspace{-18pt}
\end{figure}

We next examine the physical RA parameters. Fig.~\ref{fig:R} plots the average sum rate versus the maximum allowable rotation angle $\theta_{\max}$. For both receivers, enlarging the feasible spherical cap yields a pronounced rate improvement at small and moderate $\theta_{\max}$, followed by a clear saturation when $\theta_{\max}$ becomes sufficiently large. Beyond a certain angular freedom, further enlarging the mechanical rotation range produces only marginal gains, which suggests that the projection constraint is rarely active. Hence, a moderate rotation range, depending on the user and scatterer distributions, is generally sufficient to capture most of the RA benefit. Fig.~\ref{fig:b} shows the impact of the directional parameter $b$. Under the optimized rotation matrices, both MRC and wZF benefit from a sharper element pattern, and the gain is particularly pronounced for wZF. By contrast, for random feasible rotations, the wZF performance decreases as $b$ grows. Higher directivity is beneficial only when the boresight vectors are properly coordinated with the propagation geometry. Otherwise, narrower main lobes amplify directional mismatch and can even reduce performance.

\begin{figure}[t]
\centering
\includegraphics[width=2.6in]{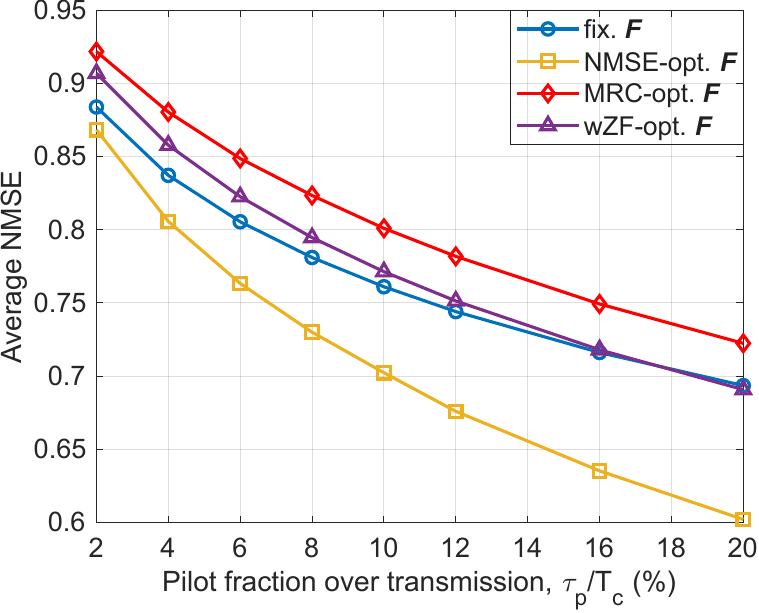}
\caption{Average active-subspace NMSE versus the pilot fraction $\tau_p/T_c$ for different RA-orientation schemes.} 
\label{fig:pilot_nmse}
\vspace{-6pt}
\end{figure}

\begin{figure}[t]
\centering
\includegraphics[width=2.6in]{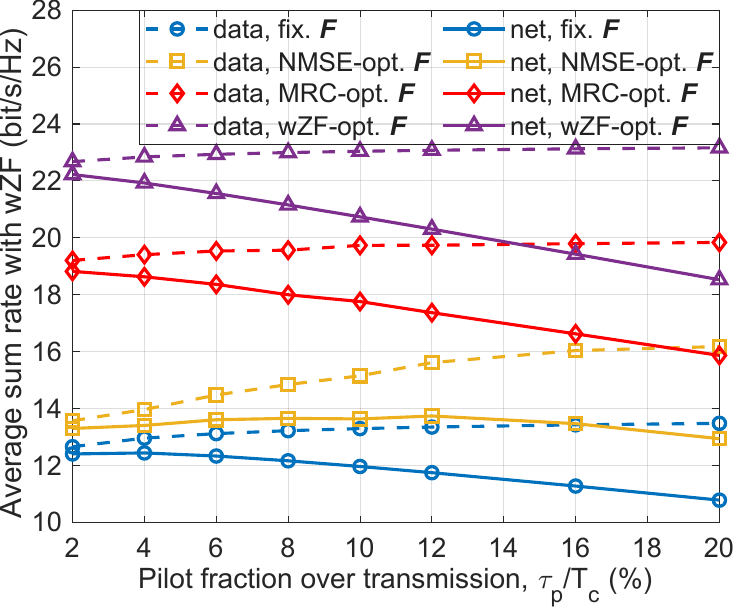}
\caption{Average data-phase and net spectral efficiency versus the pilot fraction under the wZF receiver for different RA-orientation schemes.} 
\label{fig:pilot_rate}
\vspace{-18pt}
\end{figure}

We now turn to the validation of the structural results in Lemma~\ref{lem:NMSE} and Remark~1. Fig.~\ref{fig:pilot_nmse} plots the average active-subspace NMSE versus the pilot fraction $\tau_p/T_c$, while Fig.~\ref{fig:pilot_rate} reports, under the wZF receiver, both the data-phase spectral efficiency $\sum_{k}\log_2(1+\bar{\gamma}^{\mathrm{wZF}}_k)$ and the net spectral efficiency $\eta\sum_k\log_2(1+\bar{\gamma}^{\mathrm{wZF}}_k)$ for four orientation policies: fixed broadside, NMSE-optimal, MRC-optimal, and wZF-optimal. Fig.~\ref{fig:pilot_nmse} confirms the monotonicity predicted by Lemma~\ref{lem:NMSE}: for all rotation policies, increasing the pilot fraction reduces the active-subspace NMSE, and the NMSE-optimal orientation consistently yields the smallest NMSE. Here, the curve labeled “NMSE-opt. $F$” is obtained by solving $\min_{F\in\mathcal{F}} \frac{1}{K}\sum_{k=1}^K \mathrm{NMSE}_k(\boldsymbol{F})$. However, Fig.~\ref{fig:pilot_rate} shows that better estimation quality does not automatically translate into the best communication performance. The wZF-optimal orientation achieves the largest data-phase spectral efficiency throughout the whole range of pilot fractions, while the NMSE-optimal orientation is clearly suboptimal in rate. This is a direct numerical confirmation of Remark~1: minimizing the normalized estimation error and maximizing the communication rate are generally different design objectives because the latter also depends on the deterministic LoS term and, more importantly, on multiuser coupling through the combiner structure. In addition, in Fig.~\ref{fig:pilot_rate}, the dashed curves of data-phase spectral efficiency increase with the pilot fraction because longer training improves channel estimation and, in turn, the effective SINR, while the solid curves of the net spectral efficiency may decrease once the pre-log factor $\eta=1-\tau_p/T_c$ dominates the SINR improvement. In particular, the NMSE-optimal orientation benefits the most from longer training in terms of estimation quality, yet it remains inferior to the wZF-optimal orientation in net rate.

\begin{figure}[t]
\centering
\includegraphics[width=2.6in]{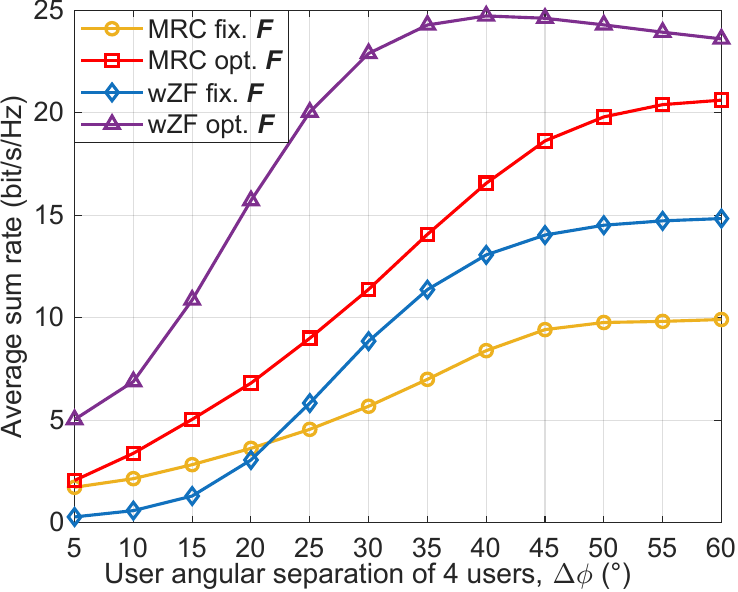}
\caption{Average sum rate versus the user angular separation.} 
\label{fig:ang_sep}
\vspace{-6pt}
\end{figure}

\begin{figure}[t]
\centering
\includegraphics[width=2.6in]{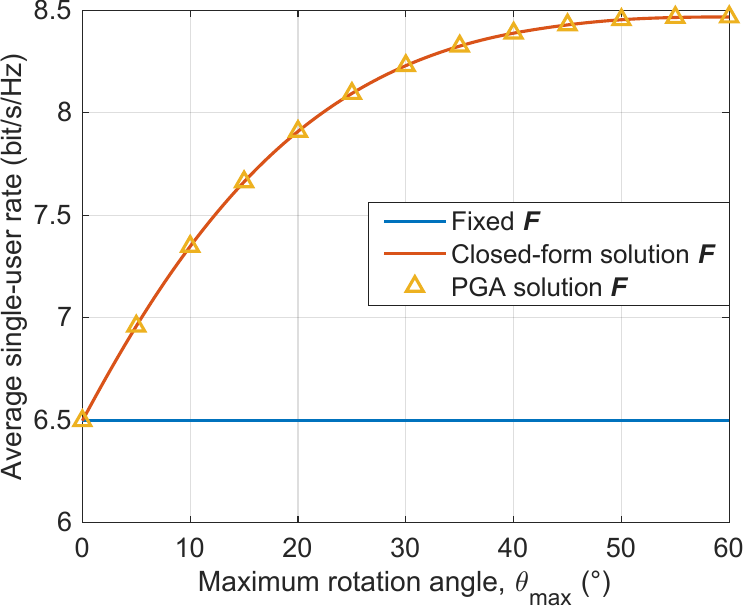}
\caption{Average rate versus the maximum rotation angle of single-user LoS-dominant case based on the closed-form and the projected gradient solution.} 
\label{fig:single}
\vspace{-18pt}
\end{figure}

Fig.~\ref{fig:ang_sep} investigates a controlled four-user geometry in which all users are placed on the same ring while their angular separation $\Delta\phi$ is varied, isolating the geometric mechanism behind the different operating principles of MRC and wZF as depicted in Remark 4 and Corollary~1, respectively.
For MRC, both the fixed and optimized curves increase monotonically and then gradually saturate as $\Delta\phi$ grows. This behavior reflects the fact that MRC primarily benefits from stronger desired channels and reduced interference leakage, but the receiver itself does not actively handle the multiuser geometry. Once the users are sufficiently separated, additional angular spread provides diminishing returns. The optimized MRC curve nevertheless remains substantially above the fixed baseline across the whole range, showing that RA rotation can simultaneously improve the coherent gain term and reduce the multiuser interference terms in \eqref{eq:MRC_I_term}. The wZF behavior is even more informative. At small $\Delta\phi$, the fixed wZF baseline performs poorly because the users are nearly aligned in angle and the effective channels are highly correlated. After RA optimization, however, the wZF rate increases sharply with $\Delta\phi$ and significantly outperforms all MRC curves. This validates the central message of Corollary~\ref{thm:wZF_schur}: RA rotation for wZF is valuable not merely because it strengthens the useful links, but because it reduces the error-aware inter-user coupling while preserving a large error-aware useful gain. Interestingly, the optimized wZF curve reaches its maximum at a moderate-to-large separation and then decreases slightly when $\Delta\phi$ becomes very large. This mild post-peak decline reveals an in-depth RA tradeoff. Once users are already well separated, the cross-correlation penalty in \eqref{eq:wZF_schur_gamma} is nearly minimized, so further angular spreading yields little additional separability gain. At the same time, the finite directional resources of the RA array and the tilt constraint make it harder to maintain uniformly large error-aware self-gains for all users. The result is a transition from a correlation-limited regime to a gain-limited regime, which appears only after the main separability bottleneck has been removed. This figure therefore offers strong evidence that MRC and wZF indeed prefer fundamentally different rotation configurations: MRC is dominated by strength aggregation, while wZF is dominated by error-aware user separability.

Finally, Fig.~\ref{fig:single} validates Proposition~\ref{thm:single_user} in the special single-user LoS-dominant regime. As the maximum rotation angle increases, both the closed-form projection solution in \eqref{eq:single_user_solution} and the PGA-based numerical solution yield a monotonic rate improvement over the fixed broadside baseline, with a diminishing-return profile as $\theta_{\max}$ becomes large. More importantly, the two optimized curves are virtually indistinguishable across the whole range of $\theta_{\max}$. This provides a clean verification of the theory in two senses. First, it confirms that, when the NLoS contribution is negligible, the rotation problem indeed decouples across antennas and reduces to per-element projection onto the spherical cap. Second, it shows that the proposed projected-gradient algorithm incurs no optimality loss in the special regime where a closed-form solution exists. 

\vspace{-6pt}
\section{Conclusion}
\label{sec:conclusion}
This paper investigated uplink multiuser MIMO with a RA array under imperfect CSI. The central observation is that RA rotation is not merely a hardware-side beam-steering refinement. Because the element boresight directions reshape both the Rician channel mean and covariance, rotation affects not only the data-phase effective channels but also the LMMSE estimator and, through it, the residual interference and effective noise seen by the receiver. Motivated by this coupling, we proposed a two-timescale design in which RA orientations are optimized from statistical CSI on a large timescale, while MRC or wZF combiners are updated from instantaneous channel estimates on a small timescale.
Within this framework, we derived a closed-form UatF-based large-timescale rate expression for MRC and a closed-form statistical surrogate for the large-timescale rate for wZF reception. For the resulting non-convex RA rotation optimization problem, we developed a projected-gradient algorithm over a product of spherical caps. Beyond the algorithm itself, the analysis yielded several insights. First, improving normalized estimation quality does not generally imply rate optimality. Second, the single-user LoS-dominant problem admits a closed-form projection solution. Third, MRC and wZF favor fundamentally different rotation behaviors: MRC mainly benefits from signal-strength aggregation, whereas wZF benefits from stronger error-aware user separation.
The numerical results validated both the accuracy of the proposed surrogates and the practical gains of the two-timescale RA design. In particular, the largest benefits were observed precisely in the regimes where conventional systems struggle most when multiuser geometry is unfavorable. Future work may extend this framework to wideband systems, downlink precoding, and joint position-orientation adaptive arrays.

\end{document}